\documentclass{article}

\usepackage{amsmath,amsthm,amssymb,bm,color,graphicx,mathrsfs}
\usepackage{amscd,verbatim}
\usepackage[breaklinks=true,colorlinks=true,linkcolor=blue,urlcolor=blue,citecolor=blue]{hyperref}
\usepackage{graphicx}
\usepackage{subfigure}
\usepackage{comment}
\usepackage{lipsum}
\usepackage[arrow,matrix,curve]{xy}
\usepackage{authblk}
\usepackage[numbers]{natbib}
\usepackage{braket}

\usepackage{tikz-cd}
\usetikzlibrary{decorations.markings}
\tikzset{->-/.style={decoration={
  markings,
  mark=at position #1 with {\arrow{>}}},postaction={decorate}}}
\tikzset{->-/.default=0.5}

\usetikzlibrary{hobby}

\newtheorem{theorem}{Theorem}
\newtheorem{definition}{Definition}
\newtheorem{proposition}{Proposition}
\newtheorem{lemma}{Lemma}
\newtheorem{corollary}{Corollary}

\theoremstyle{remark}
\newtheorem{remark}{Remark}
\newtheorem{example}{Example}

\newcommand\wt{\widetilde}
\newcommand\wh{\widehat}

\newcommand{\CC}{{\mathbb C}}
\newcommand{\PP}{{\mathbb P}}
\newcommand{\RR}{{\mathbb R}}
\newcommand{\TT}{{\mathbb T}}
\newcommand{\ZZ}{{\mathbb Z}}
\newcommand{\NN}{{\mathbb N}}

\newcommand{\vect}[1]{\boldsymbol{#1}}


\begin{document}

\title{On spectral flow and Fermi arcs}

\author[1]{Guo Chuan Thiang}
\affil[1]{Beijing International Center for Mathematical Research, Peking University, No.\ 5 Yiheyuan Road, Haidian District, Beijing 100871, China}
\affil[1]{School of Mathematical Sciences, University of Adelaide, SA 5005, Australia}

\renewcommand\Authands{ and }


\maketitle

\begin{abstract}
We introduce spectral flow techniques to explain why the Fermi arcs of Weyl semimetals are topologically protected against boundary condition changes and perturbations. We first analyse the topology of a certain universal space of self-adjoint half-line massive Dirac Hamiltonians, and then exploit its non-trivial and homotopy invariant spectral flow structure by pulling it back to generic Weyl semimetal models. The homological perspective of using Dirac strings/Euler chains as global topological invariants of Weyl semimetals/Fermi arcs, is thereby analytically justified.
\end{abstract}

\section*{Introduction}
In solid-state physics, 3D \emph{Weyl semimetals} \cite{AMV,Witten,MT} are characterised by the presence of topological spectral band crossings in momentum space\footnote{Whereas translations in Euclidean space are denoted by $\RR^d$, the Fourier transform, or momentum space, variable is denoted $\widehat{\RR}^d$.} ($\widehat{\RR}^3$), and the Weyl Hamiltonian models the semimetal near such a crossing point (which is called a \emph{Weyl point}), see \S\ref{sec:band.crossings}. On the surface of a Weyl semimetal, \emph{Fermi arcs} of boundary-localised states have been experimentally found \cite{Xu, Xu2, Lv, Morali, Souma} to connect \emph{pairs} of Weyl points with opposite chiralities/indices (projected to the surface momentum space $\widehat{\RR}^2$). The robustness of the global connectivity of surface spectra is much more surprising than the mere existence of gapless (zero energy) edge states, and should be considered an indispensable part of what it means for Fermi arcs to be \emph{topological}. In this paper, we establish that Fermi arc topology is prefigured by the non-trivial \emph{homotopy invariant spectral flows} associated to the Hamiltonian operator for the semimetal. Conceptually, the upshot is that the homological (``Dirac-stringy'') perspective of Fermi arcs, developed in \cite{MT, MT1} and briefly recalled in \S\ref{sec:homology.perspective}, is analytically justified in a rigorous way.

Our spectral flow analysis begins with that of the (right-handed) \emph{Weyl Hamiltonian} on Euclidean $\RR^3$,
\begin{equation}
H^{\rm Weyl}=-i\nabla\cdot\vect{\sigma}=\begin{pmatrix}-i\partial_z & -i\partial_x-\partial_y \\ -i\partial_x+\partial_y & i\partial_z\end{pmatrix}.\label{eqn:Weyl.Hamiltonian}
\end{equation}
Here $\vect{\sigma}=(\sigma_x,\sigma_y,\sigma_z)$ is shorthand for the vector of Pauli matrices. 
Fourier decomposition of $H^{\rm Weyl}$ in the $x$ and $y$ directions gives the family
\begin{equation*}
\widehat{\RR}^2\ni (p_x,p_y)\mapsto H^{\rm 1D}(p_x,p_y)=\begin{pmatrix} -i\frac{d}{dz} & p_x-ip_y \\ p_x+ip_y & i\frac{d}{dz} \end{pmatrix}
\end{equation*}
of Dirac Hamiltonians on the line $\RR$, with the off-diagonal part serving as a ``mass term''. Notice that \emph{all} possible massive Dirac Hamiltonians on $\RR$ are accounted for in this decomposition. 

When a boundary $z=0$ is introduced, the upper half-space Weyl Hamiltonian, $\wt{H}^{\rm Weyl}$, Fourier decomposes instead into Dirac Hamiltonians on a \emph{half-line}, $\wt{H}^{\rm 1D}(p_x,p_y;\gamma)$, parametrised by $(p_x,p_y)\in\widehat{\RR}^2$. Here, an extra boundary condition parameter $\gamma\in{\rm U}(1)$ is needed to ensure self-adjointness of the half-line Dirac Hamiltonians, therefore also of $\wt{H}^{\rm Weyl}$. In \S\ref{sec:spectral.flow.structure}, we will show that the space $\mathcal{M}$ of \emph{all} possible self-adjoint \emph{half-line} Dirac Hamiltonians is naturally \emph{continuously} parametrised by $(p_x,p_y;\gamma)\in\widehat{\RR}^2\times{\rm U}(1)$, with respect to certain important unbounded operator topologies (Corollary \ref{cor:general.Dirac.loop.sf}). The subspace $\mathcal{M}^\prime$ of massless half-line Dirac Hamiltonians is therefore a \emph{continuous} family of unbounded self-adjoint \emph{Fredholm} operators, and loops within $\mathcal{M}^\prime$ have explicitly computable and non-trivial \emph{spectral flow} across zero energy, see \S\ref{sec:spectral.flow.generalities}. 

In particular, for the Fourier decomposition of $\wt{H}^{\rm Weyl}$ into half-line Dirac Hamiltonians, we find that \emph{independently} of boundary conditions, \emph{any} loop winding around the origin has non-zero spectral flow across $0$, so that a zero-energy eigenstate must occur for some operator in the loop. Consequently, there is a one-dimensional locus in $\widehat{\RR}^2$ for which zero-energy spectra occur --- this is the \emph{basic Fermi arc} emanating from $(0,0)$, see Fig.\ \ref{fig:Dirac.spectral.flow}. Exploiting the homotopy invariance of spectral flow, together with a perturbation analysis, we demonstrate in \S\ref{sec:Weyl.Fermi}, the stability of the Fermi arc against \emph{boundary condition changes} (Prop. \ref{prop:boundary.independence}), \emph{chemical potential shifts}, \emph{confining potentials} (Thm.\ \ref{thm:perturbation.independence}), and \emph{gauge transformations} (Prop.\ \ref{prop:Wahl.twisted.sf}). The effect of these modifications is \emph{geometric}, affecting its shape but not its \emph{existence} or \emph{connectivity} from the origin to infinity. Establishing such stabilities is very important, since it is easy to construct examples of spurious Fermi arc phenomena, see \S\ref{sec:spurious.Fermi}. Furthermore, experimentally found Fermi arcs differ from those computed from idealised models in significant ways, see Remark \ref{rem:fuzzy.arc}.

Returning to Weyl semimetals, let us observe that the above local model ($H^{\rm Weyl}$) near one Weyl point can never simultaneously work for a distinct Weyl point, so we cannot capture the connectivity data of Weyl points/Fermi arcs in such first-order models. In \S\ref{sec:top.analysis}, we analyse a second-order model (\S\ref{sec:generic.model}, \S\ref{sec:shifted.Weyl.points}) which captures generic \emph{pairs} of Weyl points, reducing the local description to a classifying map $\widehat{\RR}^2\rightarrow\mathcal{M}$. The spectra of the operators in $\mathcal{M}$ are pulled back to the model under this map, as is the non-trivial spectral flow structure, leading precisely to the above phenomenon of \emph{topological} Fermi arcs connecting projected Weyl points. We also analyse the global spectral flow structure for tight-binding models in \S\ref{sec:tight.binding} using $K$-theory techniques, obtaining as a bonus, a spectral flow perspective of the gap-filling phenomenon/bulk-boundary correspondence for 2D Chern insulators (cf.\ \cite{Braverman}). 

We stress that the real strength of topological arguments is to justify extrapolation of analysis from simple exactly solvable models. It is a hopeless task to fully solve spectral problems in general (or even write down the actual complicated potential terms in the first place). In spite of this, spectral phenomena captured by topological invariants \emph{are} inherited from the simple model, \emph{once it is established that the relevant space of models is continuous in the control parameters}. The latter continuity ($\sim$ finding a good topology on the ``space of models'') is by no means automatic, especially when unbounded operators are involved. For our study of Fermi arcs, the control parameters include, but are not restricted to, boundary condition choices and perturbations.

On the purely mathematical side, although rigorous studies of spectral flow have been available in some form since \cite{AS, APS, Phillips}, these have usually been formulated for bounded self-adjoint Fredholm operators, or applied to elliptic differential operators on \emph{compact} manifolds with boundary. Only more recently, in \cite{BLP, Wahl} have these ideas been extended directly to unbounded self-adjoint Fredholm operators, although applications still generally exclude \emph{non-compact} manifolds with boundary such as the half-line. Our fairly elementary example of $\mathcal{M}^\prime$ provides motivation for further development of unbounded operator spectral flow (developed, e.g.\ in \cite{BLP,Wahl}), and we anticipate applications to the non-compact manifold setting and general bulk-boundary correspondences in physics \cite{LT}. We have formulated this study with a view towards incorporating in a future work, noncommutative spectral flow \cite{Wahl} and noncommutative geometry techniques, in order to address effects of random potentials \cite{PSB} and geometrically complicated boundaries \cite{Thiang,LT}. We also mention that in \cite{CT}, an analogous analysis of 5D Weyl semimetals has been carried out --- ``higher spectral flow'' as encoded in a ``Fermi gerbe'' protects the Fermi arcs in that case.

\section{Spectrum of half-line Dirac Hamiltonians}
It will be convenient to use polar coordinates $(m,\theta)$ rather than $(p_x,p_y)$ to label the Dirac operators on the Euclidean line $\RR$,
\begin{equation}
H^{1D}(p_x,p_y)\rightsquigarrow H^{1D}(m,\theta)=\begin{pmatrix} -i\frac{d}{dz} & me^{-i\theta} \\ me^{i\theta}& i\frac{d}{dz} \end{pmatrix},\qquad (m,\theta) \leftrightarrow (p_x,p_y)\in\widehat{\RR}^2\label{eqn:1D.Dirac}.
\end{equation}
They admit only one sensible domain of self-adjointness in $L^2(\RR)^{\oplus 2}$ --- the first Sobolev space\footnote{Here $H^1(\RR)$ is standard notation for the space of locally absolutely continuous functions with $L^2$ (weak) derivative. Later, we encounter (co)homology groups, similarly denoted $H^n, H_n$, but it should be clear from context what is meant.} $H^1(\RR)^{\oplus 2}$ --- which is often left implicit.
Note that the $m=0$ case has ill-defined $\theta$, but unambiguously refers to the massless Dirac operator. Using Fourier transform in the remaining $z$-variable, it is clear that the (essential\footnote{Recall that a self-adjoint operator $D$ is \emph{Fredholm} if it has finite-dimensional kernel and cokernel, while its \emph{essential} spectrum comprises those $\lambda\in\RR$ such that $D-\lambda$ fails to be Fredholm.}) spectrum is
\begin{equation*}
\sigma(H^{1D}(m,\theta))=\sigma_{\rm ess}(H^{1D}(m,\theta))=(-\infty,-m]\cup[m,\infty).
\end{equation*}

We wish to define \emph{half-line Dirac operators} $\wt{H}^{\rm 1D}(m,\theta)$ acting on $L^2(\RR_+)^{\oplus 2}$, where $\RR_+:=(0,\infty)$, but as discussed in Appendix \ref{sec:deficiency.indices}, the formal expression Eq.\ \eqref{eqn:1D.Dirac} becomes ambiguous. Namely, we are only assured that $\wt{H}^{\rm 1D}(m,\theta)$ is symmetric (i.e.\ formally self-adjoint) on the initial domain $C_0^\infty(\RR_+)^{\oplus 2}$, and it is necessary to specify boundary conditions at $z=0$ to make it self-adjoint. The self-adjoint boundary conditions turn out to be specified by an angular variable\footnote{We will move freely between $\gamma$ as a real number modulo $2\pi$, and $\gamma$ as a phase, i.e.\ $e^{i\gamma}$.} $\gamma\in [0,2\pi]/_{0\sim 2\pi}\cong{\rm U}(1)\ni e^{i\gamma}$.
\begin{definition}\label{defn:half.line.Dirac.domain}
The self-adjoint half-line Dirac Hamiltonians $\wt{H}^{\rm 1D}(m,\theta;\gamma)$ on $L^2(\RR_+)^{\oplus 2}$ are
\begin{align}
\wt{H}^{1D}(m,\theta;\gamma)&=\begin{pmatrix} -i\frac{d}{dz} & me^{-i\theta} \\ me^{i\theta} & i\frac{d}{dz} \end{pmatrix},\qquad\qquad\qquad\qquad (m,\theta)\in\widehat{\RR}^2,\nonumber\\
{\rm Dom}\big(\wt{H}^{\rm 1D}(m,\theta;\gamma)\big)&=\left\{\psi\in H^1(\RR_+)^{\oplus 2}\;\Big{|}\; \psi(0)\propto\begin{pmatrix}1 \\ e^{i\gamma}\end{pmatrix}\right\},\qquad e^{i\gamma}\in{\rm U}(1).\label{eqn:half.line.Dirac.domain}
\end{align}
We denote by $\mathcal{M}$, the universal three-parameter space
\begin{equation}
\wt{H}^{\rm 1D}(m,\theta;\gamma)\in \mathcal{M}\cong\widehat{\RR}^2\times{\rm U}(1)\label{eqn:half.line.Diracs}
\end{equation}
of self-adjoint half-line Dirac Hamiltonians, and by
\begin{equation*}
\mathcal{M}^\prime=(\widehat{\RR}^2\setminus\{0\})\times {\rm U}(1),
\end{equation*}
the subparameter space of the massive ($m>0$) ones.
\end{definition}
The essential spectrum of $\wt{H}^{\rm 1D}(m,\theta;\gamma)$ remains (see Appendix \ref{sec:half.line.spectrum})
\begin{equation}
\sigma_{\rm ess}(\wt{H}^{\rm 1D}(m,\theta;\gamma))= (-\infty,-m] \cup [m,\infty),\label{eqn:Dirac.essential.spectrum}
\end{equation}
so $\mathcal{M}^\prime$ sits inside the space $\mathcal{F}^{\rm sa}$ of \emph{unbounded self-adjoint Fredholm} operators.

\begin{remark} The boundary condition in Eq.\ \eqref{eqn:half.line.Dirac.domain} is physically a (pseudo-)\emph{spin polarisation condition}, as is apparent by rewriting it as 
\begin{equation*}
\psi(0)=\underbrace{\begin{pmatrix} 0 & e^{-i\gamma} \\ e^{i\gamma} & 0\end{pmatrix}}_{(\cos\gamma) \sigma_x+(\sin\gamma)\sigma_y}\psi(0),\qquad e^{i\gamma}\in {\rm U}(1).
\end{equation*}
\end{remark}

\subsection{Discrete spectrum of half-line Dirac Hamiltonians}\label{sec:massive.halfline.Dirac.basic}

Intuitively, $\sigma_{\rm ess}$ accounts for the part of the spectrum which is robust against boundary conditions and perturbations. In particular, as the boundary condition parameter $\gamma$ is varied, we could get discrete spectrum (finite-multiplicity isolated eigenvalues) appearing in the essential spectrum gap $(-m,m)$ of $\wt{H}^{\rm 1D}(m,\theta;\gamma)$.

One can explicitly verify that for $\theta\in(\gamma,\pi+\gamma)$, the eigenvalue-eigenfunction\footnote{By elliptic regularity, we only need to find smooth eigenfunctions.} system of $\wt{H}^{\rm 1D}(m,\theta;\gamma)$ is 
\begin{align}
\wt{H}^{\rm 1D}(m,\theta;\gamma)\cdot\psi_{m,\theta;\gamma}&=m\cos(\theta-\gamma)\cdot\psi_{m,\theta;\gamma},\label{eqn:1D.evalue}\\
\psi_{m,\theta;\gamma}&\propto\;z\mapsto e^{-zm\sin(\theta-\gamma)}\begin{pmatrix} 1 \\ e^{i\gamma} \end{pmatrix}.\label{eqn:1D.efunction}
\end{align}
{\bf Indications of spectral flow.} These eigenfunctions $\psi_{m,\theta;\gamma}$ are ``edge states'', since they are exponentially localised near $z=0$. When $m>0$ and $\gamma$ are fixed while $\theta$ is increased in the interval $(\gamma,\pi+\gamma)$, an eigenvalue emerges out of the bottom of the upper essential spectrum, flows downwards, and then merges into the top of the lower essential spectrum when $\theta=\pi+\gamma$. As for $\theta\in [\pi+\gamma,2\pi+\gamma]$, there is no discrete spectrum for $\wt{H}^{\rm 1D}(m,\theta;\gamma)$ (any would-be eigenfunction is unnormalisable). 

The above spectral computation is quite elementary, and had already been observed to some extent, e.g. in \cite{Witten,BGLM}. Later, we will see that the above spectral flow across $0$ is \emph{topological}, not spurious.

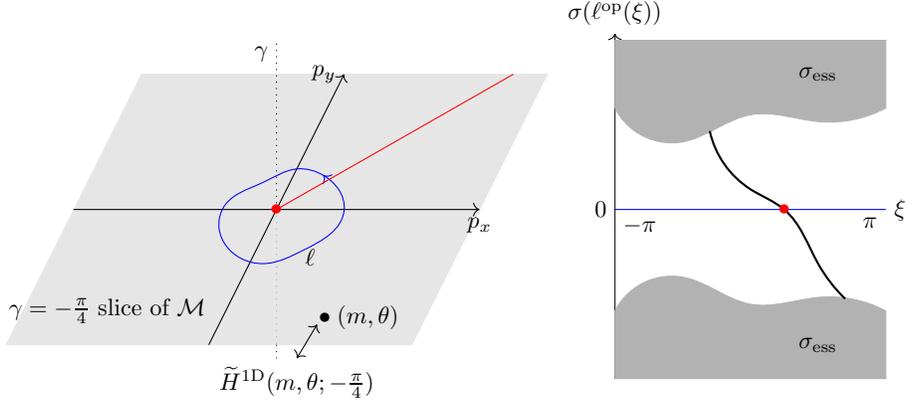
\begin{figure}[h]
\begin{center}

\begin{tikzpicture}[scale=0.9, every node/.style={scale=0.9}]


\filldraw[fill=gray!20, draw=gray!20] (-4,-2) -- (2,-2) -- (4,2) -- (-2,2);
\draw[->] (-3,0)--(3,0);
\draw[->] (-1,-2)--(1,2);
\draw[dotted] (0,0) -- (0,2.5);
\draw[dotted, draw=gray!60] (0,0) -- (0,-2);
\draw[dotted] (0,-2) -- (0,-2.2);
\draw[blue] (1,0) to [closed, curve through = {(0.7,0.5) (0.3,0.6) (-0.2,0.4) (-0.8,0) (-0.2,-0.8) (0.4,-0.6)}] (1,0);
\draw[blue,->] (0.7,0.5) to (0.69,0.51);
\draw[red] (0,0) -- (3.5,2);
\node[red] at (0,0) {$\bullet$};
\node[below] at (3,0) {$p_x$};
\node[left] at (1,2) {$p_y$};
\node at (0.5,-0.7) {$\ell$};
\node[left] at (0,2.3) {$\gamma$};
\node at (-2.5,-1.5) {$\gamma=-\frac{\pi}{4}$ slice of $\mathcal{M}$};
\node[right] at (0.5,-1.6) {$\bullet\;(m,\theta)$};
\draw[<->] (0.6,-1.7) -- (0.3,-2.2);
\node[below] at (0.3,-2.2) {$\widetilde{H}^{\rm 1D}(m,\theta;-\frac{\pi}{4})$};

\draw[blue] (5,0) -- (9,0);
\draw[->] (5,-2.5) -- (5,2.6);
\filldraw[gray!60] (5,-1.5) to [curve through = {(6,-1) (7,-1.4) (8,-1.3)}] (9,-1.5) -- (9,-2.5) -- (5,-2.5) -- (5,-1.5);
\filldraw[gray!60] (5,1.5) to [curve through = {(6,1) (7,1.4) (8,1.3)}] (9,1.5) -- (9,2.5) -- (5,2.5) -- (5,1.5);
\draw[thick] (6.4,1.16) to [curve through = {(7,0.3) (7.5,0) (8,-0.8)}] (8.4,-1.32);
\node[red] at (7.5,0) {$\bullet$}; 
\node[below right] at (5,0) {$-\pi$};
\node[below left] at (9,0) {$\pi$};
\node[left] at (5,0) {$0$};
\node at (8,2) {$\sigma_{\rm ess}$};
\node at (8,-2) {$\sigma_{\rm ess}$};
\node[above] at (5,2.6) {$\sigma(\ell^{\rm op}(\xi))$};
\node[right] at (9,0) {$\xi$};

\end{tikzpicture}
\caption{A constant $\gamma$ slice in the space $\mathcal{M}\cong\widehat{\RR}^2\times{\rm U}(1)$ of half-line Dirac Hamiltonians $\wt{H}^{\rm 1D}(m,\theta;\gamma)$. A loop $\ell$ winding around $(0,0)\in\widehat{\RR}^2$ is shown (blue curve). The spectrum of the corresponding loop $\ell^{\rm op}$ of half-line Dirac Hamiltonians is plotted on the right diagram, exhibiting spectral flow $-1$ across 0. The spectral flow of $-1$ is stable against changes in boundary conditions $\gamma$ and small perturbations. By varying the choice of loop in $\widehat{\RR}^2\setminus \{(0,0)\}$, the Fermi arc locus (red ray) is obtained.}\label{fig:Dirac.spectral.flow}
\end{center}
\end{figure}

\section{Spectral flow structure of half-line Dirac Hamiltonians}\label{sec:spectral.flow.structure}
\subsection{Generalities on spectral flow}\label{sec:spectral.flow.generalities}
{\bf Bounded case.} Along a \emph{norm-continuous} path $B_t, t\in[0,1]$, of \emph{bounded} self-adjoint Fredholm operators, 0 is never in the essential spectrum of $B_t$, but might be an eigenvalue for some of the $B_t$. Intuitively, the eigenvalues of $B_t$ near 0 depend continuously on $t$, so that we can conceive of  the \emph{spectral flow} of $\{B_t\}_{t\in[0,1]}$ as the signed number of eigenvalues (counted with multiplicity) passing through zero in the negative-to-positive sense, as $t$ is increased from $0$ to $1$ (see right diagrams of Fig.\ \ref{fig:Dirac.spectral.flow} and Fig.\ \ref{fig:bent.Fermi.arc} for sketches). Formally, one inspects the graph $y(t)=\sigma(B_t)$ of a suitably perturbed $\{B_t\}_{t\in [0,1]}$, then takes the intersection number with $y=0$, as in \S7 of \cite{APS}. A precise analytic approach involves partitioning $[0,1]$ into subintervals, and summing up the dimension change of the small negative-eigenvalue eigenspaces of $B_t$ at the initial and end points of each subinterval \cite{Phillips}. 

In \cite{AS}, it was found that the subset $\mathcal{F}^{\rm sa}_*$ of self-adjoint Fredholm operators with \emph{both} positive and negative essential spectrum, provided a classifying space for the $K^{-1}$ functor of topological $K$-theory. Thus $\pi_1(\mathcal{F}^{\rm sa}_*)\cong [S^1,{\rm U}(\infty)]=K^{-1}(S^1)\cong\ZZ$, where ${\rm U}(\infty)$ denotes Bott's infinite unitary group. This integer can actually be identified with the spectral flow of a loop in $\mathcal{F}^{\rm sa}_*$ as follows. Following Prop.\ 4 of \cite{Phillips}, one first passes from $\mathcal{F}^{\rm sa}_*$ to the ``spectrally-flattened'' space
\begin{equation*}
\hat{F}^\infty_*:=\{B\in\mathcal{F}^{\rm sa}_*\;|\;||B||=1,\;\;\sigma(B)\;{\rm is\,\,finite,}\;\;\sigma_{\rm ess}(B)=\{-1,1\}\}
\end{equation*}
by picking a small spectral interval $[-\delta,\delta]$ of $B$ containing only finitely many eigenvalues, and then pushing the spectra above $\delta$ (resp.\ below $-\delta$) to $+1$ (resp.\ $-1$). Different $B\in \mathcal{F}^{\rm sa}_*$ may require different $\delta$ to do the spectral-flattening, but a partition-of-unity argument shows that this procedure can be patched up into a well-defined map $\phi:B\mapsto \phi(B)$ which gives a homotopy equivalence $\mathcal{F}^{\rm sa}_*\simeq \hat{F}^\infty_*$. Following this, the map 
\begin{equation*}
\hat{F}^\infty_*\ni B\mapsto {\rm exp}(i\pi(B+1)) \in {\rm U}(\infty)
\end{equation*}
is also a homotopy equivalence \cite{AS, Phillips}. Finally, the following diagram of isomorphisms commutes \cite{Phillips}:
\begin{equation}
\xymatrix{
\pi_1(\mathcal{F}^{\rm sa}_*) \ar[r]^{\rm sf} \ar[d]_{\phi_*} & \ZZ  \\
\pi_1(\hat{F}^{\rm sa}_*) \ar@<-.8ex>[u]_{i_*} \ar[r]^{\rm exp} &  {\rm U}(\infty) \ar[u]_{\rm Wind}
}.\label{eqn:sf.commuting.diagram}
\end{equation}

{\bf Unbounded case.} For loops (or just paths) of self-adjoint \emph{unbounded} Fredholm operators $\mathcal{F}^{\rm sa}$, spectral flow is homotopy-invariant when $\mathcal{F}^{\rm sa}$ is given the \emph{Riesz} topology (norm-continuity of bounded transform) or \emph{gap topology} (norm-continuity of resolvents $(H\pm i)^{-1}$), see \cite{BLP, Kato}. Generally speaking, Riesz continuity is difficult to establish, while gap continuity is somewhat more manageable, and is satisfied by loops of Dirac-type operators on compact manifolds-with-boundary parametrised by boundary conditions. Spectral flow is also well-defined for the weaker \emph{Wahl} topology \cite{Wahl}, which is a slightly stronger version of strong-resolvent convergence. We use the Wahl topology at one technical step later (\S\ref{sec:Wahl.Dirac}), because gap-continuity fails for the family of (twisted) Dirac operators on the \emph{non-compact} half-line $\RR_+$. In all these topologies, the self-adjoint Fredholms $\mathcal{F}^{\rm sa}$ (without restriction on $\sigma_{\rm ess}$) give a classifying space for the $K^{-1}$ functor, \cite{Joachim,Wahl}, just as $\mathcal{F}^{sa}_*$ does in the bounded case. Thus spectral flow of an operator loop in $\mathcal{F}^{\rm sa}$ can be identified with the $K^{-1}(S^1)\cong[S^1,\mathcal{F}^{\rm sa}]\cong\ZZ$ class given by its homotopy class in $\pi_1(\mathcal{F}^{\rm sa})$.

\subsection{Continuous loops of Dirac Hamiltonians}\label{sec:continuity.loop.Dirac}

Any gap-continuous self-adjoint Fredholm operator loop $\ell^{\rm op}:S^1\rightarrow \mathcal{M}^\prime\subset\mathcal{M}$, has a homotopy invariant spectral flow, denoted ${\rm sf}(\ell^{\rm op})$.
 The remainder of this section explores the spectral flow structure of (loops inside) $\mathcal{M}^\prime$.

\subsubsection{Warm-up: Loops of half-line Dirac Hamiltonians with fixed boundary condition}\label{sec:basic.loop}
\begin{lemma}\label{lem:gap.continuous.Dirac}
For any fixed boundary condition $\gamma\in {\rm U}(1)$, the self-adjoint family
\begin{equation*}
\widehat{\RR}^2\ni (m,\theta)\mapsto \wt{H}^{\rm 1D}(m,\theta;\gamma)
\end{equation*}
is gap-continuous.
\end{lemma}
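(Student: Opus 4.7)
The plan is to exploit the fact that the boundary condition $\gamma$ is held fixed, so that the domain of $\widetilde{H}^{\rm 1D}(m,\theta;\gamma)$ given in Eq.\ \eqref{eqn:half.line.Dirac.domain} is independent of $(m,\theta)$, and that the mass term enters only as a \emph{bounded} multiplicative perturbation. Concretely, I would split
\begin{equation*}
\wt{H}^{\rm 1D}(m,\theta;\gamma)=D_\gamma+V(m,\theta),\qquad D_\gamma:=\wt{H}^{\rm 1D}(0,\cdot;\gamma),\qquad V(m,\theta):=\begin{pmatrix}0 & me^{-i\theta}\\ me^{i\theta} & 0\end{pmatrix},
\end{equation*}
where $D_\gamma$ is the massless self-adjoint half-line Dirac operator (with the $\gamma$-boundary condition) and $V(m,\theta)$ is simply a bounded, pointwise-multiplication operator on $L^2(\RR_+)^{\oplus 2}$ of operator norm $m$.

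The first step is to observe that even though $(m,\theta)$ are polar coordinates that degenerate at the origin, the underlying map is just the continuous Cartesian assignment $\widehat{\RR}^2\ni(p_x,p_y)\mapsto V$, whose off-diagonal entry is the continuous function $p_x-ip_y=me^{-i\theta}$. Hence $(m,\theta)\mapsto V(m,\theta)$ is continuous in operator norm on $\widehat{\RR}^2$, including at the origin where $V$ vanishes. The second step is the standard bounded-perturbation lemma for gap-continuity: using the second resolvent identity
\begin{equation*}
(\wt{H}^{\rm 1D}(m_1,\theta_1;\gamma)-i)^{-1}-(\wt{H}^{\rm 1D}(m_2,\theta_2;\gamma)-i)^{-1}=(\wt{H}^{\rm 1D}(m_1,\theta_1;\gamma)-i)^{-1}\bigl(V(m_2,\theta_2)-V(m_1,\theta_1)\bigr)(\wt{H}^{\rm 1D}(m_2,\theta_2;\gamma)-i)^{-1},
\end{equation*}
one estimates the resolvent difference in norm by $\|V(m_2,\theta_2)-V(m_1,\theta_1)\|$, since the two resolvents on the outside both have norm at most $1$. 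Continuity of $V$ in $(m,\theta)$ then yields gap-continuity of the family.

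There is essentially no obstacle in the argument: the only mildly delicate point is making sure the $(m,\theta)$-parametrisation is continuous at the polar singularity $m=0$, which I would address by noting that $\|V(m,\theta)\|=m\to 0$ uniformly in $\theta$, so that $V$ extends continuously to the apex and the entire parameter space $\widehat{\RR}^2$ (not just $\widehat{\RR}^2\setminus\{0\}$) indeed maps continuously into $\mathcal{F}^{\rm sa}$ with the gap topology. This lemma, although routine, is a prerequisite for applying the homotopy-invariance of spectral flow in $\mathcal{M}'$, and the more substantive task of establishing joint continuity in $\gamma$ is deferred to the subsequent results in \S\ref{sec:spectral.flow.structure}.
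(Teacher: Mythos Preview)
Your proof is correct and follows exactly the paper's approach: the paper's one-line argument is that $\wt{H}^{\rm 1D}(m,\theta;\gamma)$ is the $m=0$ operator plus a bounded mass term depending norm-continuously on $(m,\theta)\in\widehat{\RR}^2$, and you have simply spelled out the standard resolvent-identity justification for why such bounded perturbations yield gap-continuity. One small slip in your closing commentary: at $m=0$ the operator has $\sigma_{\rm ess}=\RR$ and hence is \emph{not} Fredholm, so the family maps $\widehat{\RR}^2$ gap-continuously into the self-adjoint operators, but only $\widehat{\RR}^2\setminus\{0\}$ lands in $\mathcal{F}^{\rm sa}$ --- this does not affect the lemma itself, which asserts only gap-continuity.
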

\begin{proof}
$\wt{H}^{\rm 1D}(m,\theta;\gamma)$ is obtained from the $m=0$ case by simply adding a bounded mass term which depends norm-continuously on $(m,\theta)\in\widehat{\RR}^2$. 
\end{proof}

\begin{proposition}\label{prop:basic.loop.sf}
Let $\ell:S^1\rightarrow \widehat{\RR}^2\setminus \{(0,0)\}, \xi\mapsto (m(\xi),\theta(\xi))$ be a continuous loop avoiding the origin. For each fixed $\gamma\in {\rm U}(1)$, the corresponding operator loop 
\begin{equation*}
\ell^{\rm op}:S^1\ni e^{i\xi}\mapsto \wt{H}^{\rm 1D}(m(\xi),\theta(\xi);\gamma)
\end{equation*}
is a gap-continuous loop in $\mathcal{F}^{\rm sa}$, whose spectral flow is the winding number,
\begin{equation*}
{\rm sf}(\ell^{\rm op})=-{\rm Wind}(e^{i\xi}\mapsto e^{i\theta(\xi)}).
\end{equation*}
\end{proposition}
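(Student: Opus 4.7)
The plan is to combine Lemma \ref{lem:gap.continuous.Dirac} with homotopy invariance of spectral flow and the explicit eigenvalue calculation of Eqs.\ \eqref{eqn:1D.evalue}--\eqref{eqn:1D.efunction}. First, Lemma \ref{lem:gap.continuous.Dirac} supplies gap-continuity of the family $(m,\theta)\mapsto \wt{H}^{\rm 1D}(m,\theta;\gamma)$, so pulling it back along the continuous $\ell$ produces a gap-continuous loop in $\mathcal{F}^{\rm sa}$, making ${\rm sf}(\ell^{\rm op})$ well-defined and homotopy-invariant per \S\ref{sec:spectral.flow.generalities}.

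Next, I would reduce to a standard loop. Since $\widehat{\RR}^2\setminus\{(0,0)\}$ deformation-retracts onto the circle $\{m=1\}$ with winding number $n:={\rm Wind}(e^{i\xi}\mapsto e^{i\theta(\xi)})$ as the complete homotopy invariant, the given $\ell$ is homotopic through loops in $\widehat{\RR}^2\setminus\{(0,0)\}$ to the standard loop $\xi\mapsto (1,n\xi)$. Composing this base-space homotopy with the gap-continuous map in Lemma \ref{lem:gap.continuous.Dirac} yields a gap-continuous homotopy of operator loops, so homotopy invariance reduces the problem to computing ${\rm sf}$ for the standard loop at fixed $m=1$.

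For the standard loop, I would read off the spectrum from Eqs.\ \eqref{eqn:1D.evalue}--\eqref{eqn:1D.efunction} together with Eq.\ \eqref{eqn:Dirac.essential.spectrum}. Over $\theta\in(\gamma,\pi+\gamma)$ there is a single simple eigenvalue $\lambda(\theta)=\cos(\theta-\gamma)$ in the essential gap $(-1,1)$, which emerges from the bottom of the upper essential spectrum near $\theta=\gamma$, decreases monotonically through $0$ at $\theta=\gamma+\tfrac{\pi}{2}$, and merges into the top of the lower essential spectrum as $\theta\to\pi+\gamma$; over $\theta\in[\pi+\gamma,2\pi+\gamma]$ there is no discrete spectrum in the gap. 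Each fundamental winding of $\theta$ therefore produces exactly one transversal crossing of $0$, in the positive-to-negative direction, contributing $-1$ to the spectral flow. By additivity under concatenation this gives ${\rm sf}(\ell^{\rm op})=-n$ for $n$ windings, which is the claimed formula.

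The main obstacle is tracking the sign convention: spectral flow in \S\ref{sec:spectral.flow.generalities} counts eigenvalues passing through $0$ in the negative-to-positive sense, whereas the single crossing per $\theta$-winding moves in the opposite direction, producing the minus sign. A secondary technical point is verifying that the standard-loop spectrum contains no other gap eigenvalues that might contribute; but Eq.\ \eqref{eqn:1D.efunction} already exhausts all normalisable eigenfunctions (the opposite half of $\theta$ gives exponentially growing, hence unnormalisable, candidates), so no further crossings occur. The lifting of the base-space homotopy to a gap-continuous operator homotopy is immediate from Lemma \ref{lem:gap.continuous.Dirac} and requires no separate argument.
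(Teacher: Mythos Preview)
Your proposal is correct and follows essentially the same route as the paper's proof: invoke Lemma~\ref{lem:gap.continuous.Dirac} for gap-continuity, homotope $\ell$ to the standard winding-$n$ circle, and read off ${\rm sf}=-1$ per winding from the explicit eigenvalue formula Eqs.~\eqref{eqn:1D.evalue}--\eqref{eqn:1D.efunction}. Your additional remarks on the sign convention and the absence of other gap eigenvalues are helpful elaborations, but the underlying argument is the same.
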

\begin{proof}
Gap-continuity of $\ell^{\rm op}$ is given by Lemma \ref{lem:gap.continuous.Dirac}. In particular, at any $m>0$, the basic loop 
\begin{equation*}
\ell^{\rm op}_{\rm basic}:S^1\ni e^{i\xi}\mapsto \wt{H}^{\rm 1D}(m,\xi;\gamma),
\end{equation*}
has spectral flow of $-1$, from the calculations, Eq. \eqref{eqn:1D.evalue}-\eqref{eqn:1D.efunction}, of \S\ref{sec:massive.halfline.Dirac.basic}. For a general $\ell$ with winding number $n$, it is homotopic to $e^{i\xi}\mapsto me^{in\xi}$, so that the homotopy class of $\ell^{\rm op}$ in $\pi_1(\mathcal{F}^{\rm sa})$ is similarly $n$ times that of $\ell^{\rm op}_{\rm basic}$. Thus 
\begin{equation*}
{\rm sf}(\ell^{\rm op})=n\cdot{\rm sf}(\ell^{\rm op}_{\rm basic})=-n=-{\rm Wind}(\ell)=-{\rm Wind}(e^{i\xi}\mapsto e^{i\theta(\xi)}).
\end{equation*}
\end{proof}
Note that the ${\rm Wind}(\ell)$ does not depend on the radial coordinate $m$.

\subsubsection{Loops with varying boundary conditions}\label{sec:momentum.dependent.loop}
Observe that $\wt{H}^{\rm 1D}(m,\theta;\gamma)$ is unitarily equivalent to $\wt{H}^{\rm 1D}(m,\theta-\gamma;0)$ via conjugation by the constant unitary matrix function $U_\gamma:z\mapsto {\rm diag}(1,e^{i\gamma})$: the mass term transforms as
\begin{equation*}
\begin{pmatrix}1 & 0 \\ 0 & e^{i\gamma} \end{pmatrix}\begin{pmatrix}0 & me^{-i(\theta-\gamma)} \\ me^{i(\theta-\gamma)} & 0 \end{pmatrix}\begin{pmatrix}1 & 0 \\ 0 & e^{-i\gamma} \end{pmatrix}=\begin{pmatrix}0 & me^{-i\theta} \\ me^{i\theta} & 0 \end{pmatrix},
\end{equation*}
while the boundary condition parameter is shifted from $0$ to $\gamma$ after applying $U_\gamma$.

\begin{corollary}\label{cor:general.Dirac.loop.sf}
For a general continuous loop 
\begin{equation*}
\ell:S^1\rightarrow\widehat{\RR}^2\times{\rm U}(1),\quad\xi\mapsto (m(\xi),\theta(\xi);\gamma(\xi)),
\end{equation*}
the corresponding loop of half-line Dirac Hamiltonians
\begin{equation*}
\ell^{\rm op}:S^1\rightarrow \mathcal{M}^\prime, \qquad \xi\mapsto \wt{H}^{\rm 1D}(m(\xi),\theta(\xi);\gamma(\xi)),
\end{equation*}
is gap-continuous. If $(m,\theta)$ avoids the origin, then $\ell^{\rm op}$ has spectral flow 
\begin{equation*}
{\rm sf}(\ell^{\rm op})=-{\rm Wind}(e^{i\xi}\mapsto e^{i(\theta(\xi)-\gamma(\xi))}).
\end{equation*}
\end{corollary}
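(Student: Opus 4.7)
The plan is to reduce everything to Proposition \ref{prop:basic.loop.sf} by exploiting the unitary equivalence $\wt{H}^{\rm 1D}(m,\theta;\gamma)\cong \wt{H}^{\rm 1D}(m,\theta-\gamma;0)$ established immediately before the statement. Conjugation by $U_\gamma=\mathrm{diag}(1,e^{i\gamma})$ intertwines the two operators at the level of both the differential expression and the domain. Applying this pointwise along the loop $\ell$ gives
\begin{equation*}
U_{\gamma(\xi)}\,\wt{H}^{\rm 1D}\!\bigl(m(\xi),\theta(\xi)-\gamma(\xi);0\bigr)\,U_{\gamma(\xi)}^{-1}=\wt{H}^{\rm 1D}\!\bigl(m(\xi),\theta(\xi);\gamma(\xi)\bigr).
\end{equation*}

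First I would establish gap-continuity. The assignment $\xi\mapsto U_{\gamma(\xi)}$ is norm-continuous in the bounded unitaries on $L^2(\RR_+)^{\oplus 2}$, so for $\lambda\notin\RR$ the resolvent identity
\begin{equation*}
\bigl(\wt{H}^{\rm 1D}(m(\xi),\theta(\xi);\gamma(\xi))-\lambda\bigr)^{-1}=U_{\gamma(\xi)}\bigl(\wt{H}^{\rm 1D}(m(\xi),\theta(\xi)-\gamma(\xi);0)-\lambda\bigr)^{-1}U_{\gamma(\xi)}^{-1}
\end{equation*}
reduces norm-continuity of the left-hand resolvent to norm-continuity of the right-hand one. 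The latter is provided by Lemma \ref{lem:gap.continuous.Dirac}, applied to the continuous map $\xi\mapsto (m(\xi),\theta(\xi)-\gamma(\xi))\in \widehat{\RR}^2$, since conjugation by a norm-continuous family of unitaries preserves norm-continuity.

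Next I would compute the spectral flow. Since spectral flow is invariant under conjugation by a continuous family of unitaries (the spectra agree pointwise in $\xi$), we have ${\rm sf}(\ell^{\rm op})={\rm sf}(\tilde{\ell}^{\rm op})$, where $\tilde{\ell}^{\rm op}:\xi\mapsto \wt{H}^{\rm 1D}(m(\xi),\theta(\xi)-\gamma(\xi);0)$ is a fixed-boundary-condition loop parametrised by $\tilde{\ell}:\xi\mapsto(m(\xi),\theta(\xi)-\gamma(\xi))$. The assumption that $(m,\theta)$ avoids the origin means $m(\xi)>0$ for all $\xi$, so $\tilde{\ell}$ also avoids the origin and Proposition \ref{prop:basic.loop.sf} applies, giving ${\rm sf}(\tilde{\ell}^{\rm op})=-{\rm Wind}(e^{i\xi}\mapsto e^{i(\theta(\xi)-\gamma(\xi))})$.

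The main obstacle is conceptual rather than computational: varying $\gamma$ changes the operator domain, so a priori the family is not even a continuous path of operators in any naive sense. The gap topology is precisely designed to circumvent this by testing continuity at the level of resolvents, which all live on the ambient Hilbert space; the unitary intertwiner then turns the domain-shift into a bounded, norm-continuous conjugation. Once this is in place, both assertions follow by transporting the fixed-$\gamma$ case along a continuous family of unitary equivalences.
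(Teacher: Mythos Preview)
Your proof is correct and follows essentially the same route as the paper: both reduce to the fixed-boundary-condition case $\gamma\equiv 0$ via pointwise conjugation by the norm-continuous unitary loop $\xi\mapsto U_{\gamma(\xi)}$, then invoke Lemma \ref{lem:gap.continuous.Dirac} for gap-continuity and Proposition \ref{prop:basic.loop.sf} for the spectral flow formula. Your version is slightly more explicit about the resolvent identity and the domain-change issue, but the argument is the same.
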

\begin{proof}
Note that $\gamma$ is continuous, and that $\ell^{\rm op}$ is obtained from the loop
\begin{equation*}
\hat{\ell}^{\rm op}:\xi\mapsto \wt{H}^{\rm 1D}(m(\xi),\theta(\xi)-\gamma(\xi);0)
\end{equation*} 
by pointwise conjugation with the norm-continuous unitary loop $\xi\mapsto U_{\gamma(\xi)}$.
Thus $\ell^{\rm op}$ is another gap-continuous loop, and is a self-adjoint Fredholm loop if $\ell$ avoids the origin of $\widehat{\RR}^2$. In that case, Prop.\ \ref{prop:basic.loop.sf} gives
\begin{equation*}
{\rm sf}(\ell^{\rm op})={\rm sf}(\hat{\ell}^{\rm op})=-{\rm Wind}(e^{i\xi}\mapsto e^{i(\theta(\xi)-\gamma(\xi))}).
\end{equation*}
\end{proof}

\section{The topological Fermi arcs of Weyl Hamiltonians}\label{sec:Weyl.Fermi}

\subsection{From half-space Weyl Hamiltonian to half-line Dirac Hamiltonians: boundary conditions}

The Weyl Hamiltonian, Eq.\ \eqref{eqn:Weyl.Hamiltonian}, is self-adjoint on the domain $H^1(\RR^3)^{\oplus 2}\subset L^2(\RR^3)^{\oplus 2}$. Now consider the Weyl Hamiltonian $\wt{H}^{\rm Weyl}$ on the upper half Euclidean space (or simply \emph{half-space}), $\RR^2\times\RR_+$, where $\RR_+:=(0,\infty)$. This is at first defined as a symmetric operator on $C_0^\infty(\RR^2\times\RR_+)^{\oplus 2}$, but is not essentially self-adjoint. Indeed one sees from an integration-by-parts that for any pair of differentiable wavefunctions $\varphi=(\varphi_1,\varphi_2), \psi=(\psi_1, \psi_2)$ in a domain of true self-adjointness, it is necessary that the boundary integral vanishes,
\begin{equation}
\int_{\RR^2\times\{0\}} (\overline{\varphi_1}\psi_1-\overline{\varphi_2}\psi_2)(x,y,0)\,dx\,dy=\int_{\RR^2\times\{0\}} (\varphi^\dagger\sigma_z\psi)(x,y,0)\,dx\,dy=0. \label{eqn:boundary.integral}
\end{equation}

We will only consider \emph{homogeneous} boundary conditions, i.e., independent of $(x,y,0)$ on the boundary plane. We can achieve Eq.\ \eqref{eqn:boundary.integral} by requiring that
\begin{equation}
\psi^\dagger\sigma_z\psi =0 \;\;\;{\rm at}\;\;z=0,\label{eqn:angular.momentum.reversal}
\end{equation}
(and also for $\varphi$). This is the statement that the spinors $\psi(x,y,0)$ are spin-polarised parallel to the boundary\footnote{A physical way to understand why boundary conditions must be constrained as above, is given by Witten, \S 1.10 of \cite{Witten}. Namely, the \emph{helicity}, or \emph{chirality}, of Weyl spinors is constrained such that the angular momentum is parallel to the direction of motion. So if the linear momentum $p_z$ is reversed after impinging on the boundary (to conserve probability), then the $z$-angular momentum must likewise be reversed, which is Eq.\ \eqref{eqn:angular.momentum.reversal}.}. 

With homogeneous boundary conditions, there remains $\RR^2$ translation invariance in the $x$-$y$ variables, so we can Fourier transform them into $(p_x,p_y)\in\widehat{\RR}^2$ where $\widehat{\RR}^2$ is the Pontryagin dual of $\RR^2$. Physically, $\widehat{\RR}^2$ is the conserved momenta parallel to the boundary $z=0$.
Then $\wt{H}^{\rm Weyl}$ formally Fourier transforms into the $\widehat{\RR}^2$-parametrised family of half-line Dirac Hamiltonians
\begin{equation*}
\wt{H}^{\rm Weyl}\overset{\rm Fourier}{\longrightarrow} \{\wt{H}^{\rm 1D}(m,\theta))\}_{(m,\theta)\in\widehat{\RR}^2},
\end{equation*}
where we made a switch to polar coordinates for $\widehat{\RR}^2$. Upon introducing a continuous \emph{momentum-dependent family} $\gamma:(m,\theta)\mapsto \gamma(m,\theta)\in{\rm U}(1)$ of self-adjoint boundary conditions, we have precisely a decomposition
\begin{equation}
\wt{H}^{\rm Weyl}(\gamma)\overset{\rm Fourier}{\longrightarrow} \{\wt{H}^{\rm 1D}(m,\theta;\gamma(m,\theta))\}_{(m,\theta)\in\widehat{\RR}^2}\label{eqn:Weyl.decomposed.general}
\end{equation}
into half-line Dirac Hamiltonians, encountered earlier in Eq.\ \eqref{eqn:half.line.Dirac.domain}.

\begin{definition}\label{defn:Fermi.arc}
Let $F:X\rightarrow \mathcal{F}^{sa}$ be a family of unbounded self-adjoint Fredholm operators parametrised by a set $X$. The \emph{Fermi arc} of the family is the subset 
\begin{equation*}
l_{\rm Fermi}=\{x\in X\;|\; 0\in\sigma(F(x))\}.
\end{equation*}
\end{definition}
In physics examples, $X$ will be a two-dimensional manifold, and $F$ will be a continuous family in some topology on $\mathcal{F}^{\rm sa}$. On each path $[0,1]\ni \xi\mapsto F(\xi)$, a generically finite set of isolated $\xi$ will have the $F(\xi)$ possessing a zero eigenvalue, then $l_{\rm Fermi}$ is 1-dimensional, hence the terminology ``arc''.

\begin{example}[Basic Fermi arc]\label{ex:basic.Fermi}
In case $\gamma$ is just a \emph{constant} function, $\wt{H}^{\rm Weyl}$ gives rise to the family $\{\wt{H}^{\rm 1D}(m,\theta;\gamma)\}_{(m,\theta)\in \widehat{\RR}^2}$, and on $X=\widehat{\RR}^2\setminus \{(0,0)\}$, this is a self-adjoint Fredholm family. 
By the calculations in \S\ref{sec:massive.halfline.Dirac.basic}, a zero eigenvalue occurs only for the operators $\wt{H}^{\rm 1D}(m,\gamma+\frac{\pi}{2};\gamma)$. Thus, the Fermi arc is the ray $\theta=\gamma+\frac{\pi}{2}$, see Fig.\ \ref{fig:Dirac.spectral.flow}. Notice that the momentum on the arc points in a direction which is $\frac{\pi}{2}$ anticlockwise-shifted from the surface spin polarisation angle $\gamma$ --- this is \emph{spin-momentum locking}.
\end{example}

\begin{remark}
One might be tempted to assume a momentum-independent boundary condition $\gamma$ for $\wt{H}^{\rm Weyl}$, e.g.\ \cite{Witten,HKW,BGLM}. However, in actual experiments involving Weyl semimetals (e.g.\ \cite{Xu, Xu2, Lv, Morali, Souma}), for which $H^{\rm Weyl}$ is an approximation near band crossings, one already observes that the Fermi arcs are bent and have some spin polarisation (or ``spin texture'') which is not constant in momentum space. Therefore, we \emph{must} allow momentum-dependent $\gamma$-functions, and will seek to justify why this does not affect the qualitative features of the Fermi arc.

In full generality, $\gamma$ need only be measurable, see, e.g., \cite{GL} in the context of Fourier transforming half-plane Dirac Hamiltonians, and \cite{GJT}. Typically, in a very symmetric (say rotationally) problem, continuity of $\gamma$ might be lost at special momentum space points such as the origin, and could affect the spectral flow, as we saw from Cor.\ \ref{cor:general.Dirac.loop.sf}. However, in our application to Weyl semimetals, \S\ref{sec:tight.binding.Weyl}, the projected Weyl points are in general position in $\widehat{\RR}^2$, and there is no reason why the spin polarisation boundary condition at those generic points should become discontinuous. This is why we assume that $\gamma$ is continuously-defined everywhere in $\widehat{\RR}^2$.
\end{remark}

\begin{remark}
Abstractly speaking, the symmetric operator $\wt{H}^{\rm Weyl}$ initially defined on $C_0^\infty(\RR^2\times\RR_+)$ has infinite deficiency indices, and so a plethora of self-adjoint extensions. Amongst these, a subclass corresponds to local homogeneous boundary conditions, as specified by some $\gamma:\widehat{\RR}^2\rightarrow{\rm U}(1)$. Within these, there is a special circle of momentum-independent boundary conditions. 
\end{remark}

\subsection{Stability of Fermi arcs against boundary condition modification}
The half-space Weyl Hamiltonian subject to the boundary condition specified by a \emph{continuous} $\gamma:\widehat{\RR}^2\rightarrow{\rm U}(1)$, will be denoted $\wt{H}^{\rm Weyl}(\gamma)$. Its Fourier transform along $x$-$y$ was given in Eq. \eqref{eqn:Weyl.decomposed.general}, rewritten here for convenience,
\begin{equation*}
\wt{H}^{\rm Weyl}(\gamma)=\int_{\widehat{\RR}^2}^\oplus \wt{H}^{1D}(m,\theta;\gamma(m,\theta)).
\end{equation*}
Equivalently, $\wt{H}^{\rm Weyl}(\gamma)$ is specified by the continuous ``classifying map'' 
\begin{equation}
\wt{h}^{\rm Weyl;\gamma}:\widehat{\RR}^2\rightarrow\widehat{\RR}^2\times{\rm U}(1)\cong\mathcal{M},\qquad (m,\theta)\mapsto(m,\theta;\gamma(m,\theta))\leftrightarrow \wt{H}^{1D}(m,\theta;\gamma(m,\theta)).\label{eqn:Weyl.function}
\end{equation}

As before, take any continuous momentum space loop avoiding the origin, $\ell:S^1\rightarrow \widehat{\RR}^2\setminus\{(0,0)\}$. This picks out a gap-continuous Fredholm loop (see Fig.\ \ref{fig:bent.Fermi.arc} for a sketch)
\begin{equation*}
\ell^{\rm op}_{{\rm Weyl};\gamma}:=\wt{h}^{\rm Weyl;\gamma}\circ\ell:\xi\mapsto \left(m(\xi),\theta(\xi);\gamma(\ell(\xi))\right)\leftrightarrow \wt{H}^{1D}\left(m(\xi),\theta(\xi);\gamma(\ell(\xi))\right).
\end{equation*}

\begin{proposition}\label{prop:boundary.independence}
With $\wt{H}^{\rm Weyl}(\gamma)$ a half-space Weyl Hamiltonian as above, let $\ell:S^1\rightarrow \widehat{\RR}^2\setminus\{(0,0)\}$ be a continuous momentum space loop, and 
\begin{equation*}
\ell^{\rm op}_{{\rm Weyl};\gamma}=\wt{h}^{\rm Weyl;\gamma}\circ\ell:\xi\mapsto \wt{H}^{\rm 1D}\left(m(\xi),\theta(\xi);\gamma(\ell(\xi))\right).
\end{equation*}
be the corresponding (gap-continuous, Fredholm) loop of massive half-line Dirac Hamiltonians. Its spectral flow equals the winding number of $\ell$,
\begin{equation}
{\rm sf}(\ell^{\rm op}_{{\rm Weyl};\gamma})=-{\rm Wind}\left(e^{i\xi}\mapsto e^{i(\theta(\xi))}\right),\label{eqn:sf.simple}
\end{equation}
independent of the boundary condition function $\gamma$.
\end{proposition}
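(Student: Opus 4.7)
\textbf{Proof plan for Proposition \ref{prop:boundary.independence}.}

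The plan is to reduce the statement directly to Corollary \ref{cor:general.Dirac.loop.sf}, and then isolate the contribution of the boundary-condition function $\gamma$ to the winding number. First, I would observe that the loop
\[
\ell^{\rm op}_{{\rm Weyl};\gamma}:\xi\mapsto \wt{H}^{\rm 1D}\!\left(m(\xi),\theta(\xi);\gamma(\ell(\xi))\right)
\]
is exactly the operator loop considered in Cor.\ \ref{cor:general.Dirac.loop.sf}, associated to the continuous loop $\xi\mapsto (m(\xi),\theta(\xi);\gamma(\ell(\xi)))$ in $\widehat{\RR}^2\times{\rm U}(1)\cong\mathcal{M}$; since $\ell$ avoids the origin of $\widehat{\RR}^2$, this lands in $\mathcal{M}^\prime$ and is gap-continuous Fredholm. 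Cor.\ \ref{cor:general.Dirac.loop.sf} therefore gives
\[
{\rm sf}(\ell^{\rm op}_{{\rm Weyl};\gamma})=-{\rm Wind}\!\left(e^{i\xi}\mapsto e^{i(\theta(\xi)-\gamma(\ell(\xi)))}\right).
\]

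Next, I would use the standard additivity of the winding number under pointwise multiplication of ${\rm U}(1)$-valued maps to split this as
\[
{\rm Wind}\!\left(e^{i(\theta-\gamma\circ\ell)}\right)={\rm Wind}\!\left(e^{i\theta(\xi)}\right)-{\rm Wind}\!\left(e^{i\gamma(\ell(\xi))}\right).
\]
The substantive step is then to argue that the second winding number vanishes. The key point is that $\gamma$, by hypothesis, is defined and continuous on \emph{all} of $\widehat{\RR}^2$, not merely on $\widehat{\RR}^2\setminus\{(0,0)\}$. Since $\widehat{\RR}^2\simeq\RR^2$ is contractible, the loop $\ell:S^1\to\widehat{\RR}^2$ is null-homotopic in $\widehat{\RR}^2$ (via, e.g., the straight-line homotopy to a constant), and composition with $\gamma$ yields a null-homotopy of $\gamma\circ\ell:S^1\to{\rm U}(1)$. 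Hence ${\rm Wind}(\gamma\circ\ell)=0$, and Eq.\ \eqref{eqn:sf.simple} follows.

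The step I expect to be the main conceptual (rather than technical) obstacle is the last one: one must appreciate that the \emph{global} continuity of $\gamma$ on $\widehat{\RR}^2$, including at the origin, is what forces its contribution to drop out — in contrast to Cor.\ \ref{cor:general.Dirac.loop.sf}, where $\gamma$ is only assumed continuous along the loop and can wind nontrivially. This is also the point to flag for the reader, as it is precisely the assumption highlighted in the preceding remarks: allowing $\gamma$ to be discontinuous (or merely measurable) at the origin would permit a non-zero ${\rm Wind}(\gamma\circ\ell)$, and the clean formula \eqref{eqn:sf.simple} would acquire a correction. Everything else — gap-continuity, the Fredholm property, and the identification of spectral flow with winding — is inherited wholesale from Cor.\ \ref{cor:general.Dirac.loop.sf}, so no further analytic input is required.
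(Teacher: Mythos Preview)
Your proposal is correct and follows essentially the same route as the paper's proof: apply Cor.\ \ref{cor:general.Dirac.loop.sf} to obtain ${\rm sf}(\ell^{\rm op}_{{\rm Weyl};\gamma})=-{\rm Wind}(e^{i\xi}\mapsto e^{i(\theta(\xi)-\gamma(\ell(\xi)))})$, then observe that since $\gamma$ extends continuously over the contractible space $\widehat{\RR}^2$, the loop $e^{i\xi}\mapsto e^{-i\gamma(\ell(\xi))}$ has zero winding, reducing to Eq.\ \eqref{eqn:sf.simple}. Your write-up is somewhat more explicit than the paper's (spelling out the additivity of the winding number and the null-homotopy), but the underlying argument is identical.
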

\begin{proof}
According to Corollary \ref{cor:general.Dirac.loop.sf}, we should have
\begin{equation}
{\rm sf}(\ell^{\rm op}_{{\rm Weyl};\gamma})=-{\rm Wind}\left(e^{i\xi}\mapsto e^{i(\theta(\xi)-\gamma(\ell(\xi)))}\right).\label{eqn:sf.complicated}
\end{equation}
But $\gamma$ is continuously defined over the entire momentum space $\widehat{\RR}^2$, not just on the image of the loop $\ell$, so that $e^{i\xi}\mapsto e^{-i\gamma(\ell(\xi)))}$ has zero winding. Then Eq.\ \eqref{eqn:sf.complicated} simplifies to Eq. \eqref{eqn:sf.simple}.
\end{proof}

{\bf Bent Fermi arc.} In particular, for each of the concentric momentum space loops at some fixed $m>0$, there is one (nett) contribution to the Fermi arc of $\wt{H}^{\rm Weyl}(\gamma)$ due to the $\gamma$-independent spectral flow of $-1$ across 0 energy. This contribution occurs at those momentum directions $\theta_m$ that solve $\theta_m=\gamma(m,\theta_m)+\frac{\pi}{2}$ (spin-momentum locking). Thus the effect of a general momentum-dependent boundary spin polarisation condition $\gamma$, is to bend the straight Fermi arc that connects the origin of momentum space to infinity, see Fig.\ \ref{fig:bent.Fermi.arc}.

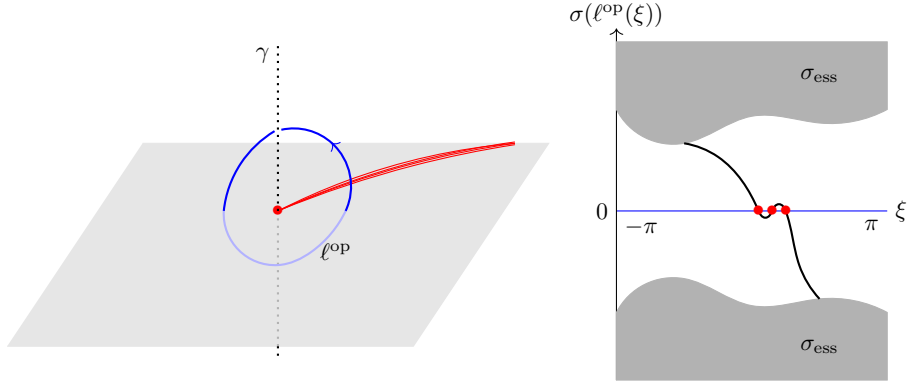
\begin{figure}[h]
\begin{center}

\begin{tikzpicture}[scale=0.9, every node/.style={scale=0.9}]


\filldraw[fill=gray!20, draw=gray!20] (-4,-2) -- (2,-2) -- (4,1) -- (-2,1);
\draw[dotted, thick, draw=gray!60] (0,0) -- (0,-2);
\draw[dotted, thick] (0,-2) -- (0,-2.2);
\draw[blue!30,thick] (-0.8,0) to [curve through = {(0,-0.8) (0.5,-0.6)}] (1,0);
\draw[red] (0,0) to [curve through = {(2,0.7)}] (3.5,1);
\draw[red] (0,0) to [curve through = {(2,0.72)}] (3.5,1.02);
\draw[red] (0,0) to [curve through = {(2,0.67)}] (3.5,0.98);
\draw[red] (0,0) to [curve through = {(1,0.44)}] (3.5,1);
\node[red] at (0,0) {$\bullet$};
\draw[blue,->] (0.8,1) to (0.79,1.01);
\draw[blue,thick] (1,0) to [curve through = {(0.8,1)}] (0.05,1.2);
\draw[blue,thick] (-0.05,1.18) to [curve through = 
 {(-0.55,0.7)}] (-0.8,0);
\draw[dotted,thick] (0,0) -- (0,2.5);

\node[right] at (0.5,-0.6) {$\ell^{\rm op}$};
\node[left] at (0,2.3) {$\gamma$};

\draw[blue] (5,0) -- (9,0);
\draw[->] (5,-2.5) -- (5,2.7);
\filldraw[gray!60] (5,-1.5) to [curve through = {(6,-1) (7,-1.4) (8,-1.3)}] (9,-1.5) -- (9,-2.5) -- (5,-2.5) -- (5,-1.5);
\filldraw[gray!60] (5,1.5) to [curve through = {(6,1) (7,1.4) (8,1.3)}] (9,1.5) -- (9,2.5) -- (5,2.5) -- (5,1.5);
\draw[thick] (6,1) to [curve through = {(6.5,0.8) (7.1,0) (7.2,-0.1) (7.3,0) (7.4,0.1) (7.5,0) (7.7,-0.8)}] (8,-1.3);
\node[red] at (7.1,0) {$\bullet$}; 
\node[red] at (7.3,0) {$\bullet$}; 
\node[red] at (7.5,0) {$\bullet$}; 
\node[below right] at (5,0) {$-\pi$};
\node[below left] at (9,0) {$\pi$};
\node[left] at (5,0) {$0$};
\node at (8,2) {$\sigma_{\rm ess}$};
\node at (8,-2) {$\sigma_{\rm ess}$};
\node[above] at (5,2.6) {$\sigma(\ell^{\rm op}(\xi))$};
\node[right] at (9,0) {$\xi$};

\end{tikzpicture}

\caption{The blue curve represents an operator loop $\ell^{\rm op}:\xi\mapsto \wt{H}^{\rm 1D}(\ell(\xi);\gamma(\xi))$, similar to that in Fig.\ \ref{fig:Dirac.spectral.flow} except that the boundary condition $\gamma$ is not assumed to be constant throughout the loop. Small perturbations of $\ell^{\rm op}(\xi)$ are also allowed. The spectral flow remains $-1$, as sketched in the right diagram, but is not generally monotone. So there may be several contributions (the three red dots) to the Fermi arc from such a loop $\ell$. This results in a fuzzy and bent Fermi arc (red fuzzy curve in left diagram).) }\label{fig:bent.Fermi.arc}
\end{center}
\end{figure}

\subsection{Stability of Fermi arcs against chemical potential changes}\label{sec:chemical.potential}
We can also consider spectral flow across $\mu\neq 0$, which is the same thing as shifting $H^{\rm Weyl}$ to  $H^{\rm Weyl}-\mu$, and taking spectral flow of $\wt{H}^{\rm 1D}(m,\theta;\gamma)-\mu$ across $0$. In the physics context, this is a modification of the \emph{chemical potential/Fermi level}. The continuity of $\wt{H}^{\rm 1D}(m,\theta;\gamma)-\mu$ with respect to the variables $m,\theta;\gamma$ is unaffected, but now Fredholmness is only achieved when $m>|\mu|$. Spectral flow (across 0) now only makes sense for large loops staying at $m> |\mu|$, and remains equal to that for the unshifted loop. The Fermi arc now emerges out of some $(m=\mu,\theta)\in\widehat{\RR}^2$ rather than exactly from $(m=0,0)$.

\subsection{Stability of Fermi arcs against potential terms}\label{sec:stability.potential}
We may add to $\wt{H}^{\rm Weyl}(\gamma)$ some extra homogeneous smooth potential term $V$, where $V=V(x,y,z)=V(z)$ is $2\times 2$ Hermitian matrix-valued and independent of $x,y$. Call the perturbed half-space Weyl Hamiltonian $\wt{H}^{\rm Weyl}_V(\gamma)=\wt{H}^{\rm Weyl}(\gamma)+V$ (this remains self-adjoint on the same domain as the unperturbed one \cite{Kato}). Such a perturbation is natural, since apart from a boundary condition $\gamma$, there should typically also be some confining potential $V$ which decays away from the boundary. After Fourier transforming in $x,y$, we get the self-adjoint family $(m,\theta)\mapsto\wt{H}^{1D}_V(m,\theta;\gamma)=\wt{H}^{\rm 1D}(m,\theta;\gamma)+V$, where $V=V(z)$ may be regarded as a bounded perturbation operator on $L^2(\RR_+)^{\oplus 2}$.

\begin{theorem}\label{thm:perturbation.independence}
Let $H^{\rm Weyl}_V(\gamma)=\wt{H}^{\rm Weyl}(\gamma)+V$ be the perturbed half-space Weyl Hamiltonian with $V=V(z)$ a $2\times 2$ Hermitian matrix-valued smooth potential vanishing at infinity. The spectral flow structure for $H^{\rm Weyl}_V(\gamma)$ is independent of $V$.
\end{theorem}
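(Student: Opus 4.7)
The plan is to reduce the claim, via Fourier decomposition in the $x,y$ directions, to a statement about the family $\wt{H}^{1D}_V(m,\theta;\gamma) = \wt{H}^{1D}(m,\theta;\gamma) + V$ acting on $L^2(\RR_+)^{\oplus 2}$. What needs to be established is that for every gap-continuous loop $\ell^{\rm op}: S^1 \to \mathcal{M}'$ of unperturbed Dirac Hamiltonians, the perturbed loop $\xi \mapsto \ell^{\rm op}(\xi) + V$ is again a gap-continuous loop of self-adjoint Fredholm operators and has the same spectral flow. This will follow from homotopy invariance once the straight-line homotopy $t \mapsto \ell^{\rm op}(\xi) + tV$, $t \in [0,1]$, is shown to stay inside $\mathcal{F}^{\rm sa}$ and to be jointly gap-continuous in $(\xi,t)$.

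Joint gap-continuity is not the hard part. Since $V = V(z)$ is bounded and self-adjoint as a multiplication operator on $L^2(\RR_+)^{\oplus 2}$, the Kato-Rellich theorem gives self-adjointness of $\wt{H}^{1D}_V(m,\theta;\gamma)$ on the unperturbed domain specified in Eq.\ \eqref{eqn:half.line.Dirac.domain}. Joint gap-continuity then follows from Lemma \ref{lem:gap.continuous.Dirac} together with the second resolvent identity applied to the norm-continuous bounded family $tV$.

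The main obstacle is preservation of Fredholmness along the homotopy, i.e.\ showing $\sigma_{\rm ess}(\wt{H}^{1D}_V(m,\theta;\gamma)) = (-\infty,-m] \cup [m,\infty)$ for every $(m,\theta;\gamma) \in \mathcal{M}'$ and every $t \in [0,1]$. By Weyl's theorem on invariance of essential spectrum, it suffices to show that $V$ is relatively compact with respect to $\wt{H}^{1D}(m,\theta;\gamma)$, i.e.\ that $V(\wt{H}^{1D}(m,\theta;\gamma) \pm i)^{-1}$ is compact on $L^2(\RR_+)^{\oplus 2}$. Since $V$ is smooth and vanishes at infinity, it is the uniform (hence operator-norm) limit of compactly supported smooth potentials $V_n$. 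For each such $V_n$, the resolvent $(\wt{H}^{1D} \pm i)^{-1}$ maps $L^2(\RR_+)^{\oplus 2}$ boundedly into $H^1(\RR_+)^{\oplus 2}$, and multiplication by a compactly supported $V_n$ followed by inclusion into $L^2$ is compact by the Rellich-Kondrachov theorem. Compactness of the limit $V(\wt{H}^{1D} \pm i)^{-1}$ follows.

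With joint gap-continuity and preservation of essential spectrum in hand, $\{\xi \mapsto \wt{H}^{1D}(\ell(\xi)) + tV\}_{t \in [0,1]}$ is a gap-continuous homotopy of loops in $\mathcal{F}^{\rm sa}$ from $\ell^{\rm op}$ to $\ell^{\rm op}_V$. Homotopy invariance of spectral flow in the gap topology (\S\ref{sec:spectral.flow.generalities}) then gives ${\rm sf}(\ell^{\rm op}_V) = {\rm sf}(\ell^{\rm op})$. Applying this to the loops $\ell^{\rm op}_{{\rm Weyl};\gamma}$ of Prop.\ \ref{prop:boundary.independence}, the spectral flow around each loop winding about the origin remains $-{\rm Wind}(e^{i\xi} \mapsto e^{i\theta(\xi)})$, so zero-energy eigenstates must still appear generically on each concentric circle in $\widehat{\RR}^2$. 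The locus of zero-energy states may be geometrically deformed by $V$, but its existence and unbounded connectivity from the origin to infinity are unaffected.
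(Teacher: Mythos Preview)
Your proof is correct and follows essentially the same strategy as the paper's: show that $V$ is relatively compact with respect to each $\wt{H}^{\rm 1D}(m,\theta;\gamma)$ so that the essential spectrum is preserved by Weyl's theorem, then use the straight-line homotopy $tV$ together with homotopy invariance of spectral flow. The only difference is in the mechanism for relative compactness --- the paper uses an explicit resolvent-kernel estimate and Hilbert--Schmidt approximation (Appendix~\ref{sec:perturbations}), whereas you invoke Rellich--Kondrachov on bounded intervals, which is equally valid and arguably cleaner.
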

\begin{proof}
A fairly standard perturbation theory argument shows that $V$ is $\wt{H}^{\rm 1D}(m,\theta;\gamma)$-relatively compact, see Appendix \ref{sec:perturbations} for details. This means that the essential spectrum of $\wt{H}^{\rm 1D}_V(m,\theta;\gamma)$ remains unchanged by adding $V$ (Theorem 5.35, \S IV of \cite{Kato}, Corollary 2, VIII.4 of \cite{RS4}). As in Prop.\ \ref{prop:boundary.independence}, let $\ell:S^1\rightarrow \widehat{\RR}^2\setminus\{(0,0)\}$ be a continuous momentum space loop; the corresponding loop of perturbed massive Dirac Hamiltonians
\begin{equation*}
\ell^{\rm op}_{{\rm Weyl}, V;\gamma}:\xi\mapsto \wt{H}^{\rm 1D}_V\left(m(\xi, \theta(\xi); \gamma(\ell(\xi))\right)
\end{equation*}
remains Fredholm and gap-continuous. By ``turning off $V$'', we can homotope $\ell^{\rm op}_{{\rm Weyl}, V;\gamma}$ to the unperturbed $\ell^{\rm op}_{{\rm Weyl};\gamma}$ of Prop.\ \ref{prop:boundary.independence}, preserving the spectral flow.
\end{proof}

\begin{remark}
Without attempting to be technically precise, we mention that we could allow worse perturbations which enlarge the essential spectrum slightly. Then only the operator loops at large enough $m$ (which had a large essential spectrum gap when the perturbation was absent) remain Fredholm loops. The Fermi arc only emerges out of an essential gap closing point at some $m>0$.
\end{remark}

\vspace{1em}
\begin{remark}\label{rem:fuzzy.arc}
{\bf Fuzziness of perturbed Fermi arc.}
With the potential term $V$, it becomes harder (generally impossible) to exactly solve for the spectrum of the various $\wt{H}^{1D}_V(m,\theta;\gamma)$. \emph{This is where the topological invariance of spectral flow becomes indispensable} --- we can still be sure that every loop (at large enough $m$) winding around the origin once will have spectral flow $-1$. This means that \emph{counted with signs}, one value of $\theta$ will contribute to the Fermi arc for each large $m>0$. If we do not count signs, there can generally be several values of $\theta$ contributing to the Fermi arc (the spectral flow need not be monotone in $\theta$, unlike the exactly solvable $V\equiv 0$ case), see Fig.\ \ref{fig:bent.Fermi.arc}. Combining the effect of $V$ with a shift of chemical potential (as in \S\ref{sec:chemical.potential}), we see that the Fermi arc is predicted to be a rather fuzzy arc emerging from an imprecisely-defined point close to the origin, as experiments show!
\end{remark}

\subsection{Gauge invariance and momentum shift of Fermi arcs}\label{sec:gauge.invariance}
To model general \emph{pairs} of Weyl points later on (\S\ref{sec:shifted.Weyl.points}), we will need to consider the Dirac operators $H^{\rm 1D}(m,\theta)$ coupled to a (necessarily flat) ${\rm U}(1)$-connection $\mathcal{A}$. Geometrically, this means that we twist (tensor) the spinor bundle $\RR\times\CC^2$ with the line bundle $\mathcal{L}_\mathcal{A}$ with connection $\mathcal{A}$. We assume, for simplicity, that the connection is translation invariant, thus of the form $\mathcal{A}=A\,dz$ for some constant $A\in\RR$ (the general case is discussed in Remark \ref{rem:general.gauge}). Then the $\mathcal{L}_\mathcal{A}$-twisted, or $A$-covariant, Dirac operator on the line, is a bounded perturbation of the untwisted one,
\begin{equation*}
H^{\rm 1D}_A(m,\theta):=\begin{pmatrix} -i\frac{d}{dz}-A & me^{-i\theta} \\ me^{i\theta} & i\frac{d}{dz}+A\end{pmatrix}. 
\end{equation*}
Were we to Fourier transform the $z$-variable, the $p_z$ momentum will simply be shifted to $p_z-A$. The unitary gauge transformation $V_A$ of multiplication by the phase function $z\mapsto e^{iAz}$, intertwines $H^{\rm 1D}_A(m,\theta)$ with the untwisted $H^{\rm 1D}(m,\theta)$, so that they have the same spectrum (only the energy-momentum dispersion is shifted, as above).

There is a circle family of self-adjoint half-line versions of $H^{\rm 1D}_A(m,\theta)$, labelled by the spin polarisation angle $\gamma$ at the boundary $z=0$. We denote these by $\wt{H}^{\rm 1D}_A(m,\theta;\gamma)$. The gauge transformation $V_A$ effects the same ${\rm U}(1)$ phase shift on both components of the spinor, so it intertwines $\wt{H}^{\rm 1D}_A(m,\theta;\gamma)$ with $\wt{H}^{\rm 1D}(m,\theta;\gamma)$, without changing the boundary condition label.

Including the extra twisting variable $A\in\RR$, the 3-parameter space $\mathcal{M}=\widehat{\RR}^2\times{\rm U}(1)$ of untwisted half-line Dirac Hamiltonians is expanded to a 4-parameter space of \emph{twisted} half-line Dirac Hamiltonians $\mathscr{M}\cong\widehat{\RR}^2\times{\rm U}(1)\times\RR$ inside the unbounded self adjoint operators. Within $\mathscr{M}$, there is a 4-parameter subset $(\widehat{\RR}^2\setminus\{(0,0)\})\times{\rm U}(1)\times\RR\cong\mathscr{M}^\prime\subset\mathcal{F}^{\rm sa}$ of massive twisted half-line Dirac Hamiltonians.

Unlike $\mathcal{M}$, however, the family $\mathscr{M}$ (thus also $\mathscr{M}^\prime$) does \emph{not} depend gap-continuously on the twisting variable $A$, see \S\ref{sec:Wahl.Dirac}, thus spectral flow of loops in $\mathscr{M}^\prime$ is ill-defined in the sense of \cite{BLP}. In the next section \S\ref{sec:Wahl.Dirac}, we explain how to fix this by working in a weaker operator topology.

\subsubsection{Wahl topology continuity of twisted half-line Dirac Hamiltonians}\label{sec:Wahl.Dirac}
Notice that $\{V_A\}_{A\in\RR}$ is a strongly-continuous 1-parameter family of unitaries, generated by the position operator $\hat{z}$. Thus, variation of $\wt{H}^{\rm 1D}_A(m,\theta;\gamma)\in\mathscr{M}$ in the $A$-variable is effected in a strongly-continuous way. Thus
\begin{equation*}
\mathscr{M}=\{\wt{H}^{\rm 1D}_A(m,\theta;\gamma)\}_{(m,\theta)\in\widehat{\RR}^2, \gamma\in{\rm U}(1), A\in\RR}
\end{equation*}
is at least a strong-resolvent continuous family of self-adjoint operators, jointly in the four variables $m,\theta,\gamma,A$. 

Recall that the conditions for a general closed operator path $\ell^{\rm op}$ to be Wahl-continuous, are that
\begin{itemize}
\item[(i)] the path of resolvents is strongly-continuous,
\item[(ii)] there exists $\epsilon>0$ and a smooth real-valued even ``bump function'' $\phi$ supported in $[-\epsilon,\epsilon]\subset\RR$ with $\phi^\prime|_{(-\epsilon,0)}>0$, such that $\xi\mapsto\phi(\ell^{\rm op}(\xi))$ is a norm-continuous path. 
\end{itemize}

Let $\ell^{\rm op}$ be a strong-resolvent continuous operator path, 
\begin{equation}
\ell^{\rm op}:[0,1]\ni\xi\mapsto \wt{H}^{\rm 1D}_{A(\xi)}(m(\xi),\theta(\xi);\gamma(\xi))\in\mathscr{M}^\prime,\label{eqn:twisted.Dirac.loop}
\end{equation}
and
\begin{equation*}
\hat{\ell}^{\rm op}:[0,1]\ni\xi\mapsto \wt{H}^{\rm 1D}(m(\xi),\theta(\xi);\gamma(\xi))\in\mathcal{M}^\prime\subset\mathscr{M}^\prime
\end{equation*}
be its untwisted version (with $A$ set to 0), which is actually gap-continuous due to Cor.\ \ref{cor:general.Dirac.loop.sf}, thus also Wahl-continuous (see \cite{Wahl}, this follows from Theorem VIII.20 of \cite{RS1}). Since $\ell^{\rm op}$ is obtained from $\hat{\ell}^{\rm op}$ by conjugation with the strongly-continuous path of unitary gauge transformations $\{V_{A(\xi)}\}_{\xi\in[0,1]}$, the former is also Wahl continuous \cite{Wahl}, with the same (well-defined) spectral flow as $\hat{\ell}^{\rm op}$. The Wahl continuity of $\ell^{\rm op}$ can also be seen directly by observing that each $\ell^{\rm op}(\xi)$ is a twisted half-line Dirac Hamiltonian with at most one eigenvalue in the essential spectrum gap, with eigenfunction $V_{A(\xi)}\cdot\psi_{m(\xi),\theta(\xi);\gamma(\xi)}$ (given by Eq. \eqref{eqn:1D.efunction}), and that the corresponding rank-1 eigenprojections vary continuously with $A$.

Thus, in the same way as Cor.\ \ref{cor:general.Dirac.loop.sf} and Prop.\ \ref{prop:boundary.independence}, we obtain
\begin{proposition}\label{prop:Wahl.twisted.sf}
Let $\ell:S^1\rightarrow\widehat{\RR}^2\setminus\{(0,0)\}$ be a continuous momentum space loop, $\gamma:\widehat{\RR}^2\rightarrow{\rm U}(1)$ be a continuous boundary condition function, and $A:\widehat{\RR}^2\rightarrow\RR$ be a continuous twisting function. The corresponding loop of twisted massive half-line Dirac Hamiltonians,
\begin{equation*}
\ell^{\rm op}_{\gamma,A}:\xi\mapsto \wt{H}^{\rm 1D}_{A(\ell(\xi))}(\ell(\xi);\gamma(\ell(\xi)))\equiv \wt{H}^{\rm 1D}_{A(\ell(\xi))}(m(\xi),\theta(\xi);\gamma(\ell(\xi))),
\end{equation*}
is a Wahl-continuous self-adjoint Fredholm loop, whose spectral flow is
\begin{equation*}
{\rm sf}(\ell^{\rm op}_{\gamma,A})=-{\rm Wind}(e^{i\xi}\mapsto e^{i\theta(\xi)}),
\end{equation*}
independently of $\gamma$ and $A$.
\end{proposition}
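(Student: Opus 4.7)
The strategy is to reduce the twisted, only-Wahl-continuous statement to the untwisted gap-continuous case already handled in Cor.\ \ref{cor:general.Dirac.loop.sf} and Prop.\ \ref{prop:boundary.independence}, by exploiting the gauge intertwiner $\wt{H}^{\rm 1D}_A(m,\theta;\gamma)=V_A\,\wt{H}^{\rm 1D}(m,\theta;\gamma)\,V_A^*$ from \S\ref{sec:gauge.invariance}.

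First I would introduce the untwisted companion
\[
\hat{\ell}^{\rm op}:\xi\mapsto \wt{H}^{\rm 1D}(m(\xi),\theta(\xi);\gamma(\ell(\xi))).
\]
Since $\ell$ avoids the origin of $\widehat{\RR}^2$ and $\gamma,\ell$ are continuous, Cor.\ \ref{cor:general.Dirac.loop.sf} shows that $\hat{\ell}^{\rm op}$ is a gap-continuous (hence Wahl-continuous) self-adjoint Fredholm loop in $\mathcal{M}^\prime$, with
\[
{\rm sf}(\hat{\ell}^{\rm op})=-{\rm Wind}\bigl(e^{i\xi}\mapsto e^{i(\theta(\xi)-\gamma(\ell(\xi)))}\bigr).
\]
Repeating the disk-extension argument of Prop.\ \ref{prop:boundary.independence}, the global continuity of $\gamma$ on all of $\widehat{\RR}^2$ forces $\xi\mapsto e^{-i\gamma(\ell(\xi))}$ to be nullhomotopic in ${\rm U}(1)$, so its winding vanishes and the formula reduces to $-{\rm Wind}(e^{i\xi}\mapsto e^{i\theta(\xi)})$.

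Next I would reinstate the twist via
\[
\ell^{\rm op}_{\gamma,A}(\xi)=V_{A(\ell(\xi))}\,\hat{\ell}^{\rm op}(\xi)\,V_{A(\ell(\xi))}^*.
\]
The unitary loop $\xi\mapsto V_{A(\ell(\xi))}$ is strongly continuous because $A\circ\ell$ is continuous and $A\mapsto V_A$ is strongly continuous, being generated by the position operator $\hat z$. Appealing to the conjugation-invariance principle for Wahl continuity and spectral flow recorded in \S\ref{sec:Wahl.Dirac} (following \cite{Wahl}), $\ell^{\rm op}_{\gamma,A}$ is Wahl-continuous with ${\rm sf}(\ell^{\rm op}_{\gamma,A})={\rm sf}(\hat{\ell}^{\rm op})$, which combined with the preceding computation gives the stated formula.

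The main obstacle is verifying the bump-function condition (ii) of Wahl continuity for $\ell^{\rm op}_{\gamma,A}$, since strong-resolvent continuity alone does not suffice. For this I would use the functional-calculus identity $\phi(\ell^{\rm op}_{\gamma,A}(\xi))=V_{A(\ell(\xi))}\,\phi(\hat{\ell}^{\rm op}(\xi))\,V_{A(\ell(\xi))}^*$ together with two observations: the factor $\phi(\hat{\ell}^{\rm op}(\xi))$ is norm-continuous (from Wahl continuity of the untwisted loop) and of rank at most one, by the explicit eigenvalue description in Eqs.\ \eqref{eqn:1D.evalue}--\eqref{eqn:1D.efunction} bounding the number of discrete eigenvalues in the gap; and a strongly-continuous unitary loop acts norm-continuously on any norm-continuous family of uniformly finite-rank operators, which one verifies by decomposing $V_\xi P_\xi V_\xi^* - V_{\xi_0} P_{\xi_0} V_{\xi_0}^*$ into a norm-small piece in $P_\xi-P_{\xi_0}$ plus terms $(V_\xi-V_{\xi_0})P_{\xi_0}$ that are norm-small by strong convergence restricted to the finite-dimensional range of $P_{\xi_0}$.
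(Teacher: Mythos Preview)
Your proposal is correct and follows essentially the same approach as the paper: reduce to the untwisted gap-continuous loop $\hat{\ell}^{\rm op}$ via Cor.~\ref{cor:general.Dirac.loop.sf}, kill the $\gamma$-contribution to the winding as in Prop.~\ref{prop:boundary.independence}, and then restore the twist by conjugating with the strongly-continuous unitary family $V_{A(\ell(\xi))}$, invoking Wahl's conjugation-stability and the at-most-rank-one discrete spectrum to verify condition (ii). Your final paragraph spelling out why strongly-continuous conjugation of a norm-continuous finite-rank family remains norm-continuous is a useful elaboration of what the paper leaves as a one-line observation about the rank-1 eigenprojections varying continuously.
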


\begin{remark}
As in Theorem \ref{thm:perturbation.independence}, a perturbation term $V$ may also be added, without modifying the spectral flow structure in Prop.\ \ref{prop:Wahl.twisted.sf}.
\end{remark}
\begin{remark}\label{rem:general.gauge}
We could also allow more general $z$-dependent gauge fields $\mathcal{A}=\mathcal{A}(z)\,dz$. The spectral flow along a path of $\wt{H}^{\rm 1D}_\mathcal{A}(m,\theta;\gamma)$ remains well-defined, provided the gauge transformations $V_\mathcal{A}$ vary strongly-continuously in the path parameter. 
\end{remark}

\begin{example}
Consider a 3D Weyl Hamiltonian coupled to a gauge field 
$\mathcal{A}=\sum_{i=1}^3 \mathcal{A}_i\,dx^i$ with vanishing curvature. Assuming the gauge field is constant in $x,y$, then $\mathcal{A}=A_x\,dx +A_y\,dy +\mathcal{A}_z(z) dz$, with some constants $A_x, A_y$. We can still Fourier transform the half-space version $\wt{H}^{\rm Weyl}_\mathcal{A}(\gamma)$ in the $x,y$-variables. Noting that the effect of $A_x\,dx +A_y\,dy$ is just to shift the momenta $(p_x,p_y)$ to $(p_x-A_x, p_y-A_y)$, we obtain a momentum-shifted family of twisted half-line Dirac Hamiltonians,
\begin{equation*}
\wt{H}^{\rm Weyl}_\mathcal{A}(\gamma)\rightsquigarrow\left(\widehat{\RR}^2\ni (p_x,p_y)\mapsto \wt{H}^{\rm 1D}_{\mathcal{A}_z}(p_x-A_x, p_y-A_y;\gamma(p_x,p_y))\right).
\end{equation*}
In this family, the twist $\mathcal{A}_z$ does not change with $(p_x, p_y)$, so the entire family can be gauge transformed by $V_{-\mathcal{A}_z}$ to remove the twist. Then all the arguments about Fermi arcs follow in the same way from Prop.\ \ref{prop:Wahl.twisted.sf}, except that the origin in momentum space is shifted to $(A_x, A_y)$. The spin-momentum locking condition for the Fermi arc is $\theta_\mathcal{A}=\gamma+\frac{\pi}{2}$, with $\theta_\mathcal{A}$ now denoting the angular variable taken with respect to the shifted origin $(A_x, A_y)$.
\end{example}

\subsection{Examples of spurious Fermi arcs and gapless edge states}\label{sec:spurious.Fermi}
\begin{example}\label{ex:spurious.Fermi}
The massless Dirac Hamiltonian in 3D is the direct sum of the left-handed and right-handed Weyl Hamiltonians. Clearly the half-space 3D Dirac Hamiltonian has two Fermi arcs, one coming from each Weyl Hamiltonian, although they are associated with equal and opposite spectral flows. Thus it is possible to have zero nett spectral flow but still have Fermi arcs. The point now is that we can perturb the massless 3D Dirac Hamiltonian, e.g.\ by turning on an off-diagonal mass term, in order to remove the zero eigenvalues. This means that the two Fermi arcs are \emph{spurious}, not \emph{topological}. 

\end{example}

\begin{example}\label{ex:half.Chern} This is essentially the massive Dirac operator on a half-plane, also studied in \cite{GL}.
Fix some $a>0$. On the $y$-$z$ plane $\RR^2$, consider the operator on $L^2(\RR^2)^{\oplus 2}$ (self-adjoint on the Sobolev space)
\begin{equation*}
H^{\rm 2D}_a:=\begin{pmatrix}-i\partial_z & a-\partial_y \\ a+\partial_y & i\partial_z\end{pmatrix}\; \overset{{\rm Fourier\, in}\; y}{\longleftrightarrow} \; \left\{\begin{pmatrix}-i\frac{d}{dz} & a-ip_y\\ a+ip_y & i\frac{d}{dz}\end{pmatrix}\equiv H^{\rm 1D}(p_x=a,p_y)\right\}_{p_y\in\widehat{\RR}}.
\end{equation*}
Fix some boundary condition $e^{i\gamma}\in{\rm U}(1)$ for the half-plane ($z\geq 0$) operator $\wt{H}^{\rm 2D}_a(\gamma)$, and reparametrise $p_y\in\widehat{\RR}$ in polar coordinates into $\theta=\tan^{-1}(\frac{p_y}{a})\in(-\frac{\pi}{2},\frac{\pi}{2})$, so that the above 1-parameter family of Dirac Hamiltonians becomes the 1-parameter family of massive half-line Dirac Hamiltonians,
\begin{equation*}
\wt{H}^{\rm 1D}_a(\theta;\gamma):\theta\mapsto\begin{pmatrix}-i\frac{d}{dz} & a(1-i\tan\theta) \\ a(1+i\tan\theta) & i\frac{d}{dz}\ \end{pmatrix}\equiv \wt{H}^{\rm 1D}(m=a\sec\theta, \theta;\gamma),
\end{equation*}
which has $(-a\sec\theta, a\sec\theta)$ as a common essential spectral gap. In particular, $H^{\rm 2D}_a$ has this interval as a spectral gap. Utilising the eigenvalue computation of Eq.\ \eqref{eqn:1D.evalue}-\eqref{eqn:1D.efunction}, we plot in Fig.\ \ref{fig:Unusual.flow} the full spectrum of the operator family as a function of the parameter $\theta$ (see also Fig.\ 2 of \cite{GL}). The flow of eigenvalues as $\theta$ is varied depends greatly on the chosen boundary condition $\gamma$. For $\gamma\in (-\pi,0)$, there is spectral flow of $-1$ across zero energy, with a curve of eigenvalues connecting the upper and lower essential spectra (possibly at the end-points of the operator path). Thus $\wt{H}^{\rm 2D}_a(\gamma)$ has the gap-filling phenomenon in this regime. On the other hand, for $\gamma\in (0,\frac{\pi}{2})\cup(\frac{\pi}{2},\pi)$, the curve of eigenvalues only comes out of either the upper or lower essential spectrum, and flows further upwards or downwards to infinity. 

Fourier-transforming $H_a^{\rm 2D}$ in both $y$ and $z$ gives the two-parameter family
\begin{equation*}
\widehat{\RR}^2\ni (p_y,p_z)\mapsto H_a(p_y,p_z):=(a,p_y,p_z)\cdot\vect{\sigma},
\end{equation*}
so there is a negative-energy eigenbundle $\mathcal{E}_-$ over $\widehat{\RR}^2$. The classifying map $\widehat{\RR}^2\rightarrow\CC\PP^1\cong S^2$ for $\mathcal{E}_-$, namely $(a,p_y,p_z)\mapsto \frac{(a,p_y,p_z)}{|(a,p_y,p_z)|}$, covers exactly half of the Bloch sphere $\CC\PP^1\cong S^2$, so it has ``Chern number $\frac{1}{2}$''. Correspondingly, the spectral gap-filling phenomenon occurs for only half of the possible boundary conditions $\gamma$. Thus it is stable in the \emph{weak} sense that small variations of $\gamma$ do not typically destroy the phenomenon. 

Related to the above, we mention that in \cite{GJT}, an interesting model of shallow-water waves was studied in the context of a ``violation of the bulk-edge correspondence'', in the sense that the number of arcs of edge states emerging/entering a bulk spectral band depends not just on the topological invariant of the band (a genuine integer-quantised Chern number) but also the boundary condition, see Fig.\ 1 there.

\end{example}

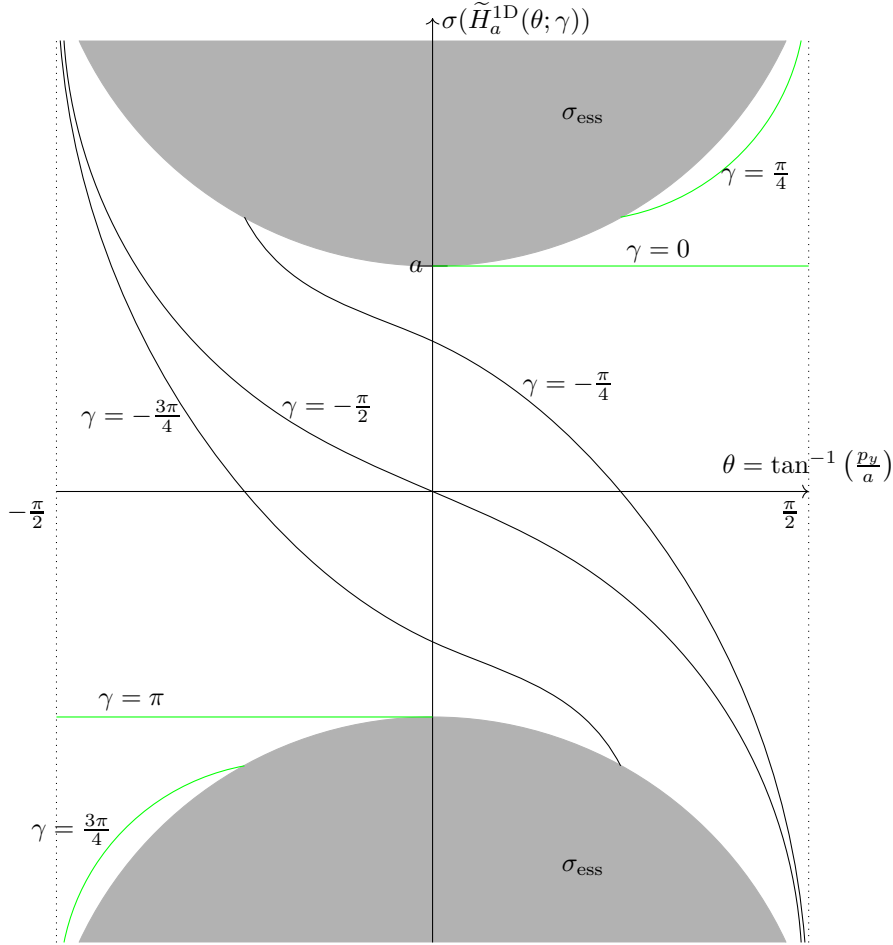
\begin{figure}[h!]
\begin{center}


\begin{tikzpicture}
\draw[dotted] (-5,-6) -- (-5,6);
\draw[dotted] (5,-6) -- (5,6);
\filldraw[gray!60, draw=gray!60] (-4.7,6) to [curve through = {(0,3)}] (4.7,6); 
\filldraw[gray!60, draw=gray!60] (-4.7,-6) to [curve through = {(0,-3)}] (4.7,-6); 
\draw[->] (-5,0) -- (5,0);
\draw[->] (0,-6) -- (0,6.3);
\node[above] at (5,0) {$\theta=\tan^{-1}\left(\frac{p_y}{a}\right)$};
\node[right] at (0,6.3) {$\sigma(\widetilde{H}^{\rm 1D}_a(\theta;\gamma))$};
\node[below left] at (5,0) {$\frac{\pi}{2}$};
\node[below left] at (-5,0) {$-\frac{\pi}{2}$};

\draw (-0.2,3) -- (0.2,3);
\node[above, left] at (0,3) {$a$};
\node at (2,5) {$\sigma_{\rm ess}$};
\node at (2,-5) {$\sigma_{\rm ess}$};

\draw (-2.5,3.65) to [curve through = {(-2,3) (0,2) (2.5,0)}] (4.95,-6);
\draw (2.5,-3.65) to [curve through = {(2,-3) (0,-2) (-2.5,0)}] (-4.95,6);
\draw (-4.9,6) to [curve through = {(-2,1) (0,0) (2,-1)}] (4.9,-6);
\draw[green] (0,3) -- (5,3);
\draw[green] (0,-3) -- (-5,-3);
\draw[green] (2.5,3.65) to [curve through = {(4.5,5)}] (4.9,6);
\draw[green] (-2.5,-3.65) to [curve through = {(-4.5,-5)}] (-4.9,-6);

\node at (4.3,4.2) {$\gamma=\frac{\pi}{4}$};
\node at (3,3.2) {$\gamma=0$};
\node at (1.8,1.4) {$\gamma=-\frac{\pi}{4}$};
\node at (-1.4,1.1) {$\gamma=-\frac{\pi}{2}$};
\node at (-4,1) {$\gamma=-\frac{3\pi}{4}$};
\node at (-4,-2.8) {$\gamma=\pi$};
\node at (-4.8,-4.5) {$\gamma=\frac{3\pi}{4}$};

\end{tikzpicture}

\caption{When the half-plane operator $\wt{H}^{\rm 2D}_a(\gamma)$ of Example \ref{ex:half.Chern} is Fourier transformed in $y$-variable, we get a 1-parameter family of massive half-line Dirac Hamiltonians, $\theta\mapsto \wt{H}^{\rm 1D}_a(\theta;\gamma)$, whose spectrum is plotted above. Spectral flow across the essential spectrum gap only occurs for some choices of boundary condition $\gamma$ (black curves) but not others (green curves).
}\label{fig:Unusual.flow}
\end{center}
\end{figure}

\section{Fermi arcs of Weyl semimetals via spectral flow: tight-binding models}\label{sec:tight.binding}
\subsection{Topological band crossings}\label{sec:band.crossings}
In band theory, one takes a 3D Schr\"{o}dinger operator $H=-\nabla^2+V$ with periodic potential $V$ (i.e.\ invariant only under a lattice $\ZZ^3$ of translations), and carries out Bloch--Floquet transform (Fourier transform with respect to $\ZZ^3$) to produce a family $\{H(\vect{k})\}_{\vect{k}\in\TT^3}$ of Bloch Hamiltonians, parametrised by the Brillouin torus $\TT^3={\rm Hom}(\ZZ^3, {\rm U}(1))$ (``quasi-momentum space''). Effectively, for each $\vect{k}=(k_1,k_2,k_3)\in\TT^3$, one is solving the Schr\"{o}dinger equation on the compact position-space torus $T^3=\RR^3/\ZZ^3$, subject to the quasi-periodicity condition that there is a phase-shift of $e^{ik_j}$ after going around the $j$-th cycle of $T^3$.

Thus each $H(\vect{k})$ is elliptic on $T^3$ with discrete spectrum accumulating at infinity. Under some unrestrictive conditions on $V$, the spectrum of $H(\vect{k})$ varies continuously (or even smoothly, analytically etc.), see \cite{RS4} XIII.16. In any event, we take as an assumption that the spectra of $H(\vect{k})$ as $\vect{k}$ is varied, form smooth bands.

It can happen that a pair of energy bands crosses at some point $\vect{k}=\vect{k}^*$. Near $\vect{k}^*$, in a local trivalisation of the Bloch bundle for the two bands in question, we would write the effective $2\times 2$ Hermitian matrix family
\begin{equation*}
H_{\rm eff}(\vect{k})=a(\vect{k})+\vect{b}(\vect{k})\cdot\vect{\sigma}\label{eqn:tight.binding.vector}
\end{equation*}
for some scalar function $a$ and some 3-component vector field $\vect{b}=(b_1,b_2,b_3)$. We set $a\equiv 0$ for now\footnote{The important condition is really that $H(\vect{k})$ maintains a spectral gap about zero whenever $\vect{k}\neq \vect{k}^*$, or equivalently, $|a(\vect{k})|<|\vect{b}(\vect{k})|$ needs to hold away from $\vect{k}^*$. Given this, the rest of this section can easily accommodate an additional continuous $a$ without modification of the spectral flow values, cf.\ \S\ref{sec:chemical.potential}.}, so that the zeros/singularities of $\vect{b}$ are exactly the band crossing points, such as $\vect{k^*}$. To a first (linear approximation), we have
\begin{equation*}
H_{\rm eff}(\vect{k})\approx H^{\rm lin}_{\rm eff}(\vect{k})=\sum_{i,j=1}^3 b_{ij}(\vect{k}-\vect{k}^*)_j\,\sigma_i,\qquad b_{ij}=\frac{\partial b_i}{\partial k_j}\Big |_{\vect{k}=\vect{k}^*}
\end{equation*}
When ${\rm det\,}(b_{ij})\neq 0$, $\vect{k}^*$ is called a \emph{Weyl point}, and ${\rm sgn}({\rm det\,}(b_{ij}))\in\{-1,+1\}$ is its \emph{chirality}.
As in \S 1.2 of \cite{Witten}, if we modify the metric to $G^{ij}=\sum_{k=1}^3 b^i_{\;k}b^{j}_{\;k}$, corresponding to the transformation $p_i=\sum_{j=1}^3 b_{ij}(k_j-k^*_j)$, we recover $H_{\rm eff}(\vect{p})=\vect{p}\cdot \vect{\sigma}$, which is the Fourier transform of the Weyl Hamiltonian $H^{\rm Weyl}$.

We recognise the chirality of a Weyl point $\vect{k}^*$ as the \emph{local index} of the vector field $\vect{b}$ at the zero $\vect{k}^*$. The chirality is topological in the sense that it remains unchanged even if $\vect{b}$ is modified near $\vect{k}^*$ (without introducing any other zeros), see \S6 of \cite{Milnor}. If the two bands involved are spectrally isolated from the remaining bands and can be trivialised over the \emph{entire} Brillouin torus $\TT^3$, then $\vect{b}$ actually extends to a \emph{global} vector field over $\TT^3$. The sum of local indices over the zeros of $\vect{b}$ is forced to vanish by the Poincar\'{e}--Hopf theorem \cite{MT1, MT}. Consequently, the two bands must cross again at another Weyl point $\vect{k}^\star$ with the opposite chirality, see Fig.\ \ref{fig:Dirac.to.Fermi}.

\begin{figure}[h!]
\begin{center}
\begin{tikzpicture}
\fill[gray!30, opacity=0.7] (2,3) -- (8,3) -- (8,5) -- (2,5) -- (2,3);
\fill[gray!10] (2,5) -- (2,8) -- (10,8) -- (10,3) -- (8,0) -- (8,5) -- (2,5);
\filldraw[fill=green, draw=black, dashed, thick] (0,0) -- (8,0) -- (10,3) -- (2,3) -- (0,0);
\draw[dashed] (2,0) -- (4,3);
\draw[dashed] (6,0) -- (8,3);
\draw[dashed] (2,8) -- (2,3);
\draw[->] (9,6) -- (9,4);
\node[right] at (9,5) {$\pi$};
\filldraw[fill=gray!20, opacity=0.5, draw=blue, thick] (7,1) to [curve through = {(7.3,1.1) (7.5,1.8) (7,2) (6.5,1.2) (6.6,1.1)}] (7,1);
\filldraw[fill=gray!20, opacity=0.5, draw=blue, thick] (7,6) to [curve through = {(7.3,6.1) (7.5,6.8) (7,7) (6.5,6.2) (6.6,6.1)}] (7,6);
\node at (7.6,1) {$\ell_{w^-}$};
\draw[dashed](6.42,6.4) -- (6.42,1.4);
\draw[dashed] (7.57,6.6) -- (7.57,1.6);
\draw[very thick, blue] (3.5,0) -- (5.5,3);
\node[right] at (4,0.8) {$\ell_{\rm large}$};
\node[red] at (3,1.5) {$\bullet$};
\node[red] at (7,1.5) {$\bullet$};
\draw[red, thick] (3,1.5) .. controls (4.7,1.8) .. (7,1.5);
\node at (3,4.5) {$\bullet$};
\node[left] at (3,4.5) {$k^*$};
\node[right] at (7,4) {$k^\star$};
\node[left] at (3,1.5) {$w^+$};
\node[above] at (7,1.5) {$w^-$};
\node at (7,4) {$\bullet$};
\draw[dotted] (3,4.5) -- (3,1.5);
\draw[dotted] (7,4) -- (7,0);

\draw[thick] (3,4.5) .. controls (3.6,5.4) .. (4.6,5.4);
\fill[fill=blue!20, opacity=0.7] (3.5,0) -- (5.5,3) -- (5.5,8) -- (3.5,5);
\draw[thick] (4.6,5.4) .. controls (6,5.3) .. (7,4);
\draw[very thick] (0,0) -- (8,0) -- (10,3) -- (10,8) -- (2,8) -- (0,5) -- (8,5) -- (10,8);
\draw[very thick] (0,0) -- (0,5);
\draw[very thick] (8,0) -- (8,5);

\draw[->] (0,7) -- (0.5,7);
\draw[->] (0,7) -- (0,7.5);
\draw[->] (0,7) -- (0.24,7.36);
\node[right] at (0.5,7) {$k_x$};
\node[above] at (0,7.5) {$k_z$};
\node[right] at (0.24,7.36) {$k_y$};
\end{tikzpicture}
\caption{Bulk Brillouin zone $\TT^3$ drawn as a cube with opposite faces identified. The map $\pi:\TT^3\rightarrow \TT^2$  projects onto the surface Brillouin zone $\TT^2$ (green face). For a two-band tight-binding model for a Weyl semimetal Hamiltonian $H_{\rm eff}(\vect{k})=\vect{b}(\vect{k})\cdot\vect{\sigma}$, a pair of Weyl points $\vect{k}^*, \vect{k}^\star$ connected by a Dirac string (black curve) is shown, with projected Weyl points $w^+, w^-$. The spectral flow around a small operator loop $\ell^{\rm op}_{w^\pm}:\xi\mapsto \wt{H}^{\rm 1D}_{\rm eff}(\ell_{w^\pm}(\xi))$ equals the Chern class of the negative eigenbundle $\mathcal{E}_-$ restricted to the cylinder above $\ell_{w^\pm}$. This is $\pm 1$ according to the chirality of the enclosed Weyl point. Thus a Fermi arc must emerge from both $w^\pm$. The connectivity of the arc is deduced by taking the operator loop $\ell^{\rm op}_{\rm  large}:\xi\mapsto \wt{H}^{\rm 1D}_{\rm eff}(\ell_{\rm large}(\xi))$, and deducing that its spectral flow is the Chern class of the negative eigenbundle $\mathcal{E}_-$ restricted to the blue cylinder above $\ell_{\rm large}$. Therefore the Dirac string ``projects'' onto the Fermi arc homologically.}\label{fig:Dirac.to.Fermi}
\end{center}
\end{figure}

\subsection{Tight-binding models of Weyl semimetals}\label{sec:tight.binding.Weyl}
The effective two-band Bloch Hamiltonian family $\{H_{\rm eff}(\vect{k})=\vect{b}(\vect{k})\cdot\vect{\sigma}\}_{\vect{k}\in\TT^3}$ acts on $L^2(\TT^3)^{\oplus 2}$ by multiplication. The spectrum of each $2\times 2$ matrix $H_{\rm eff}(\vect{k})$ is $\pm|\vect{b}(\vect{k})|$, which degenerates to $0$ exactly on the subset $W\subset\TT^3$ of Weyl points. Over $\TT^3\setminus W$, there is a well-defined negative-energy eigenbundle $\mathcal{E}_-$ given by the negative eigenspaces $\mathcal{E}_-(\vect{k})$ of $H_{\rm eff}(\vect{k})$. This line bundle has a Chern class in $H^2(\TT^3\setminus W)$. Observe that $\mathcal{E}_-(\vect{k})$ is just the point in $\CC\PP^1$ corresponding to the unit vector $\hat{\vect{b}}(\vect{k}):=\vect{b}(\vect{k})/|\vect{b}(\vect{k})|\in S^2$ (this is the Bloch sphere identification $S^2\cong\CC\PP^1$, up to a sign choice). Thus, the unit vector field $\vect{b}:\TT^3\setminus W\rightarrow S^2$ can be interpreted as a classifying map for two-band effective gapped Hamiltonians, so that $\mathcal{E}_-$ is the pullback under $\vect{b}$ of the tautological line bundle over $\CC\PP^1$.

\begin{remark}\label{rem:Dirac.string}
We can restrict $\mathcal{E}_-$ and its Chern class to any closed two-submanifold of $\TT^3\setminus W$. For example, if $S^2_{\vect{k}^*}$ is a small 2-sphere enclosing a Weyl point $\vect{k}^*$ but no other points of $W$, then the Chern class of $\mathcal{E}_-|_{S^2_{\vect{k}^*}}$ in $H_{\rm eff}^2(S^2_{\vect{k}^*})\cong\ZZ$ is $\pm 1$ according to the chirality of $\vect{k}^*$. In this sense, each Weyl point serves as a \emph{Dirac monopole} for the ${\rm U}(1)$ line bundle $\mathcal{E}_-$. As explained in \cite{MT1}, the Poincar\'{e} dual to the Chern class of $\mathcal{E}_-$ is its \emph{Dirac string} --- a relative homology class in $H_1(\TT^3,W)$.
\end{remark}

\subsubsection{Two-band Weyl semimetal on the half-space}
If we only Fourier transform in $x,y$, we can rewrite the $H_{\rm eff}$ as 
\begin{equation*}
H_{\rm eff}\cong \int^\oplus_{\vect{k}\in\TT^3} H_{\rm eff}(\vect{k})=\int^\oplus_{(k_x,k_y)\in\TT^2}\underbrace{H_{\rm eff}^{\rm 1D}(k_x,k_y)}_{\int^{\oplus}_{k_z\in\TT} H_{\rm eff}(k_x,k_y,k_z)},
\end{equation*}
where each $H_{\rm eff}^{\rm 1D}(k_x,k_y)$ acts on $L^2(\TT)^{\oplus 2}\cong \ell^2(\ZZ)^{\oplus 2}$ and has spectrum 
\begin{equation*}
\sigma(H_{\rm eff}^{\rm 1D}(k_x,k_y))=\cup_{k_z\in\TT}\;\sigma(H_{\rm eff}(k_x,k_y,k_z)).
\end{equation*}
Since each $H_{\rm eff}^{\rm 1D}(k_x,k_y)$ is itself decomposable into a continuous family of $2\times 2$ matrices $\{H_{\rm eff}(k_x,k_y,k_z)\}_{k_z\in\TT}$, it may be considered as a self-adjoint element in the $C^*$-algebra $M_2(C(\TT^2))$. The eigenvalues of $H_{\rm eff}(k_x,k_y,k_z)$ vary continuously in $k_z$, so $\sigma(H_{\rm eff}^{\rm 1D}(k_x,k_y))$, being the union over $k_z\in\TT$ of $\sigma(H_{\rm eff}(k_x,k_y,k_z))$, comprises only essential spectrum. 

Let $\tilde{W}\subset\TT^2$ denote the set of projected Weyl points, i.e.\ the image of $W$ under the map $\pi:\TT^3\rightarrow\TT^2$ projecting out the $k_z$ variable. Then away from $\tilde{W}$, we have a spectral gap,
\begin{equation*}
0\not\in\sigma_{(\rm ess)}(H_{\rm eff}^{\rm 1D}(k_x,k_y))\;\Leftrightarrow\; (k_x,k_y)\in\TT^2\setminus\tilde{W}.
\end{equation*}

Our real interest is in the upper half-space operator $\wt{H}_{\rm eff}$. To define this, we factorise $L^2(\TT^3)$ as $L^2(\TT^2)\otimes L^2(\TT)\cong L^2(\TT^2)\otimes \ell^2(\ZZ)$. Let $L^2(\TT^2)\otimes \ell^2(\NN)$ be the ``upper-half'' Hilbert subspace, and denote its inclusion and projection maps by $\iota$ and $p$ respectively. Then $\wt{H}_{\rm eff}$ is the compression $p\circ H_{\rm eff}\circ \iota$ to an operator on $(L^2(\TT^2)\otimes \ell^2(\NN))^{\oplus 2}$. We have the compressed version of the Fourier decomposition,
\begin{equation*}
\wt{H}_{\rm eff}\cong \int^\oplus_{(k_x,k_y)\in\TT^2} \wt{H}_{\rm eff}^{\rm 1D}(k_x,k_y)
\end{equation*}
where each $\wt{H}_{\rm eff}^{\rm 1D}(k_x,k_y)$ is the compression of $H_{\rm eff}^{\rm 1D} (k_x,k_y)$ to the ``upper-half-line'' subspace $\ell^2(\NN)^{\oplus 2}$. 

{\bf Toeplitz algebra.}
The Fourier transform identifies $\ell^2(\NN)\subset \ell^2(\ZZ)$ with the Hardy subspace of $L^2(\TT)$. This means that $\wt{H}_{\rm eff}^{\rm 1D}(k_x,k_y)$ gives an element of $M_2(\mathcal{T})$, where $\mathcal{T}$ is the \emph{Toeplitz algebra}, and $M_2(\cdot)$ denotes the $2\times 2$ matrix algebra over $(\cdot)$, see \S 1.3 of \cite{PSB} for a further discussion of 1D tight-binding Hamiltonians and $\mathcal{T}$. 

With $\mathcal{K}$ denoting the compact operators on $\ell^2(\NN)$, the fundamental Toeplitz $C^*$-algebra exact sequence is
\begin{equation}
0\rightarrow \mathcal{K}\rightarrow\mathcal{T}\overset{q}{\rightarrow} C(\TT)\rightarrow 0,\label{eqn:basic.Toeplitz}
\end{equation}
which extends to matrix algebras $M_n(\cdot)$ in the obvious way. The quotient map (or \emph{symbol map}) $q$ undoes the compression $p\circ(\cdot)\circ\iota$, taking $\wt{H}_{\rm eff}^{\rm 1D}(k_x,k_y)\mapsto H_{\rm eff}^{\rm 1D}(k_x,k_y)$. The essential spectrum of $\wt{H}_{\rm eff}^{\rm 1D}(k_x,k_y)$ is its spectrum modulo compacts, so
\begin{equation*}
\sigma_{\rm ess}(\wt{H}_{\rm eff}^{\rm 1D}(k_x,k_y))=\sigma_{\rm ess} (H_{\rm eff}^{\rm 1D}(k_x,k_y)).
\end{equation*}
Notice that modifying $\wt{H}_{\rm eff}^{\rm 1D}(k_x,k_y)$ by a compact perturbation keeps its $\sigma_{\rm ess}$ intact --- such a perturbation could be thought of as a change of boundary conditions in the tight-binding setting.

\subsection{$K$-theory computation of spectral flow in two-band Weyl semimetals}\label{sec:tight.binding.sf}
\emph{Individually}, for each $(k_x,k_y)\in\TT^2$, the operator $\wt{H}_{\rm eff}^{\rm 1D}(k_x,k_y)$ has some extra discrete spectrum (``edge states''), which is \emph{not} stable against compact perturbations. \emph{Collectively}, however, any continuous \emph{loop} $\ell:S^1\rightarrow\TT^2\setminus\tilde{W}$, with coordinates $\ell(\xi)=(k_x(\xi),k_y(\xi))$, gives a corresponding norm-continuous operator loop 
\begin{equation*}
\ell^{\rm op}:S^1\ni \xi\mapsto \wt{H}_{\rm eff}^{\rm 1D}(\ell(\xi))\equiv\wt{H}_{\rm eff}^{\rm 1D}(k_x(\xi),k_y(\xi))
\end{equation*}
in the bounded self-adjoint Fredholms $\mathcal{F}^{\rm sa}_*$. So $\ell^{\rm op}$ has a spectral flow ${\rm sf}(\ell^{\rm op})\in\ZZ$, dependent only on the homotopy class $[\ell]\in\pi_1(\TT^2\setminus\tilde{W})$, due to Eq. \eqref{eqn:sf.commuting.diagram}. It actually suffices to take $[\ell]$ in the abelianisation $H_1(\TT^2\setminus\tilde{W})$. A compact perturbation $K(\xi)$, depending continuously on $\xi$, can even be added to such an operator loop without changing its spectral flow. 

Because the half-line operator family $\{\wt{H}_{\rm eff}^{\rm 1D}(k_x,k_y)\}_{(k_x,k_y)\in\TT^2}$ can be quite complicated, it is not efficient, nor generally possible, to compute the spectral flow of loops $\ell^{\rm op}$ in this family by solving the spectral problems directly. Instead, we will explain how to compute the spectral flow structure using $K$-theory.

Each operator loop $\ell^{\rm op}$ above is a continuous loop of Toeplitz operators, defining an element of the algebra $C(S^1)\otimes M_2(\mathcal{T})$. The latter algebra sits (after taking $M_2(\cdot)$) in the middle of the Toeplitz exact sequence (Eq. \eqref{eqn:basic.Toeplitz}) tensored with $C(S^1)$:
\begin{equation}
0\rightarrow C(S^1)\otimes \mathcal{K}\rightarrow C(S^1)\otimes \mathcal{T}\overset{{\rm id}\otimes q}{\longrightarrow} C(S^1)\otimes C(\TT)\rightarrow 0.\label{eqn:SES}
\end{equation}
The quotient map ${\rm id}\otimes q$ converts the half-line operator loop $\ell^{\rm op}$ back to the full-line operator loop 
\begin{equation*}
\breve{\ell}^{\rm op}:S^1\ni \xi\mapsto H^{\rm 1D}_{\rm eff}(\ell(\xi)) \cong \int^\oplus_{k_z\in \TT} \underbrace{H_{\rm eff}(\ell(\xi), k_z)}_{H_{\rm eff}(k_x(\xi), k_y(\xi), k_z)}.
\end{equation*}
We may Fourier transform $\breve{\ell}^{\rm op}$ in the $z$-variable, obtaining a 2-torus-parametrised family of $2\times 2$ matrices
\begin{equation*}
S^1\times\TT\ni (\xi,k_z) \mapsto H_{\rm eff}(\ell(\xi), k_z)\equiv H_{\rm eff}(k_x(\xi), k_y(\xi), k_z)\equiv\vect{b}(\ell(\xi), k_z)\cdot\vect{\sigma}.
\end{equation*}
We shall refer to $S^1\times \TT$ as a ``cylinder'' $S^1\times [0,2\pi]$, with the understanding that the upper and lower boundary circles are identified. Notice that the entire cylinder avoids the Weyl points $W\subset\pi^{-1}(\tilde{W})$, so that all the matrices $H_{\rm eff}(\ell(\xi), k_z)$ are spectrally-gapped at $0$. Thus we can extract out the negative eigenbundle over the cylinder, using the continuous projection family
\begin{align}
P_{-,\ell}&=\left\{(\xi,k_z)\mapsto\frac{1-{\rm sgn}(H_{\rm eff}(\ell(\xi), k_z))}{2}\right\}_{(\xi,k_z)\in S^1\times\TT}\label{eqn:loop.of.projections}\\
&=\left\{(\xi,k_z)\mapsto\frac{1-{\rm sgn}(\vect{b}(\ell(\xi))\cdot\vect{\sigma})}{2}\right\}_{(\xi,k_z)\in S^1\times\TT}\quad\in M_2(C(S^1)\otimes C(\TT)).\nonumber
\end{align}
This projection gives a $K$-theory class $[P_{-,\ell}]\in K_0(C(S^1)\otimes C(\TT))$; in terms of topological $K$-theory, this is the class of $\mathcal{E}_-$ in $K^0(S^1\times\TT)$. In fact, via the classifying vector field $\vect{b}$, the latter $K$-theory class is just pulled back from that of the tautological (Hopf) bundle over $\CC\PP^1$ to the cylinder.

In the $K$-theory long exact sequence for Eq. \eqref{eqn:SES}, the connecting exponential map
\begin{equation*}
{\rm Exp}:\underbrace{K_0(C(S^1)\otimes C(\TT))}_{\ZZ\oplus\ZZ}\rightarrow \underbrace{K_1(C(S^1)\otimes \mathcal{K})}_{K_1(C(S^1))}\cong\ZZ\label{eqn:K.exponential}
\end{equation*}
takes the non-trivial (Hopf) generator to a generator. We refer to \S2 of \cite{Thiang} for more details, and only report what ${\rm Exp}$ does. The group $K_0(C(S^1)\otimes C(\TT))\cong\ZZ\oplus\ZZ$ has one ``trivial'' generator represented by the identity projection ${\mathbf 1}_{C(S^1)\otimes C(\TT)}$, while the second generator $[P_{\rm Chern}]$ comes from pulling back the Hopf projection in $M_2(C(S^2))$ under a degree $1$ map\footnote{The 2D Chern insulator has, by definition, its negative eigenprojection in the class of $P_{\rm Chern}$, thus the name. There are sign conventions at various stages, such how $\CC\PP^1$ is identified with the unit vectors of $S^2$ in the Bloch sphere construction, and also the Chern number $\pm 1$ of the tautological (Hopf) bundle.} $S^1\times\TT\rightarrow \CC\PP^1\cong S^2$. Also, $K_1(C(S^1))\cong\ZZ$ has generator represented by the unitary map ${\mathtt w}:e^{i\xi}\mapsto e^{i\xi}$ with winding number 1. Then
\begin{equation*}
{\rm Exp}:[P_{\rm Chern}]\mapsto -[{\mathtt w}],
\end{equation*}
and this non-trivial map plays a fundamental role in proving the \emph{gap-filling property} for half-space Chern insulators \cite{PSB,Thiang}. 
The gap-filling property actually persists for more complicated half-space truncations than $z\geq 0$ \cite{Thiang,LT}. 

Let us compute the $K$-theory class ${\rm Exp}[P_{-,\ell}]$ in terms of the spectral flow of $\ell^{\rm op}$. Recall from Eq.\ \eqref{eqn:loop.of.projections} that $P_{-,\ell}$ is a loop of projection operators on $\ell^2(\ZZ)^{\oplus 2}$. 
By construction, the spectral-flattening operation $\phi$ (via functional calculus, recall \S{\ref{sec:spectral.flow.generalities})) has the property that 
\begin{equation*}
q\left(\phi\left(\wt{H}^{\rm 1D}_{\rm eff}(\ell(\xi))\right)\right)=\phi\left(q\left(\wt{H}^{\rm 1D}_{\rm eff}(\ell(\xi))\right)\right)=\phi\left(H^{\rm 1D}_{\rm eff}(\ell(\xi))\right)={\rm sgn}\left(H^{\rm 1D}_{\rm eff}(\ell(\xi))\right).
\end{equation*}
Therefore, the projection loop $P_{-,\ell}$ lifts (under ${\rm id}\otimes q$) to the self-adjoint loop
\begin{equation*}
Q_{-,\ell}:=\left\{\xi\mapsto \frac{1-\phi\left(\wt{H}^{\rm 1D}_{\rm eff}(\ell(\xi))\right)}{2}\right\}_{\xi\in S^1}.
\end{equation*}
The lifted operators $Q_{-,\ell}(\xi)$ are not necessarily projections, since $\phi\left(\wt{H}^{\rm 1D}_{\rm eff}(\ell(\xi))\right)$ may acquire spectra outside of $\{-1,1\}$. In fact, an obstruction to finding a \emph{projection} lift of $P_{-\ell}$, is precisely given by the $K$-theory connecting map ${\rm Exp}$.

By definition of ${\rm Exp}:K_0(C(S^1)\otimes C(\TT))\rightarrow K_1(C(S^1))$, 
\begin{align*}
{\rm Exp}[P_{-,\ell}]:= [{\rm exp}(-2\pi i Q_{-,\ell})]&=\left[\xi\mapsto {\rm exp}\left(i\pi \left(\phi(\wt{H}^{\rm 1D}_{\rm eff}\left(\ell(\xi)\right))+1\right)\right)\right]\\
&\in [S^1,{\rm U}(\infty)]= K_1(C(S^1)).
\end{align*}
Applying the winding homomorphism and Eq. \eqref{eqn:sf.commuting.diagram}, this gives
\begin{equation}
{\rm Wind}\circ{\rm Exp}[P_{-,\ell}]={\rm sf}(\wt{H}^{\rm 1D}_{\rm eff}(\ell(\cdot)))\equiv {\rm sf}(\ell^{\rm op}).\label{eqn:exponential.sf}
\end{equation}

Of special interest are the small loops $\ell_w$ encircling a projected Weyl point $w\in\tilde{W}$, see Fig.\ \ref{fig:Dirac.to.Fermi}. 
If $w$ is the projected Weyl point for a single Weyl point of chirality $+1$, the classifying vector field $\vect{b}$ restricts to a degree $1$ map from the cylinder $S^1\times \TT$ to $S^2\cong\CC\PP^1$, so that $[P_{-,\ell_w}]\in K_0(C(S^1)\otimes C(\TT))$ is the non-trivial (Hopf) generator $[P_{\rm Chern}]$. Then Eq.\ \eqref{eqn:exponential.sf} is simply
\begin{equation*}
-1 ={\rm Wind}(-[\mathtt w])={\rm Wind}({\rm Exp}[P_{-,\ell_w}]) = {\rm sf}(\ell^{\rm op}_w).\label{eqn:sf.around.WP}
\end{equation*}
If $w$ is the projection image of several different Weyl points, then the degree of the cylinder-restricted $\vect{b}$, and thus the $K$-theory class of $P_{-,\ell_w}$, is the sum of the local indices of $\vect{b}$ at the enclosed Weyl points. Thus the spectral flow of $\ell^{\rm op}_w$ is the sum of these indices by the additivity of spectral flow.

Also of interest are the ``large loops'' $\ell_{\rm large}$ in $\TT^2\setminus \tilde{W}$, wrapping around a cycle of $\TT^2$, see Fig.\ \ref{fig:Dirac.to.Fermi}. Sitting above $\ell_{\rm large}$ is a large cylinder (actually a 2-torus slice) in $\TT^3$. The degree of the classifying vector field $\vect{b}$ restricted to such a large cylinder is \emph{not} determined just by the local indices of the Weyl points, see \cite{MT, MT1}, but whatever the degree is (sometimes referred to as ``2D Chern numbers''), it will determine the spectral flow of $\ell^{\rm op}_{\rm large}$. Globally, all these degrees, and therefore the resultant spectral flows, are encoded by counting intersections with the Dirac strings (Remark \ref{rem:Dirac.string}) connecting the Weyl points.

\subsection{Homology of Fermi arcs}\label{sec:homology.Fermi.arc}
By considering all the different homology classes of loops $\ell$ in $\TT^2\setminus \tilde{W}$ in turn, we can map out the spectral flow structure of the Weyl semimetal $(k_x,k_y)\mapsto \wt{H}^{\rm 1D}_{\rm eff}(k_x,k_y)$ completely. For a loop $\ell^{\rm op}$ with non-trivial spectral flow $n$, there are $n$ (nett) points on the loop where $\ell^{\rm op}(\xi)$ has $0$-eigenvalues, and these points contribute to the Fermi arc of the Weyl semimetal. By Definition \ref{defn:Fermi.arc}, the Fermi arcs are the intersection of the spectral graph $y(k_x,k_y)=\sigma(\wt{H}^{\rm 1D}_{\rm eff}(k_x,k_y))$, with the zero energy level surface $y=0$, and represent a class in $H_1(\TT^2,\tilde{W})$. The latter homology class is completely specified by its intersection numbers with $\ell$ (Poincar\'{e} duality), in other words, by the various spectral flows ${\rm sf}(\ell^{\rm op})$. In turn, the latter spectral flows are fully determined by the Dirac strings of $H_{\rm eff}$.
The upshot is that the Fermi arcs represent the class in $H_1(\TT^2,\tilde{W})$ obtained by homological projection (under the map $\TT^3\rightarrow \TT^2$) of the Dirac strings of the Weyl semimetal $H_{\rm eff}$ --- the thesis of \cite{MT1}, summarised in Fig.\ \ref{fig:Dirac.to.Fermi}.

The true effective half-space Hamiltonian $\wt{H}_{\rm eff,\, true}$ may actually be the compression $\wt{H}_{\rm eff}$ plus some boundary perturbation term $M_2(C(S^1)\otimes\mathcal{K})$, in accordance with Eq.\ \eqref{eqn:SES}. Nevertheless, besides Weyl-point-preserving deformations of $\vect{b}$, such perturbations do not change the above $K$-theory computations of the spectral flow structure for $\wt{H}_{\rm eff,\, true}$. Therefore, \emph{the homology class of the resultant Fermi arcs are perturbation-resistant}, even though the perturbation can effect dramatic geometric deformations of the Fermi arcs, possibly even ``rewiring'' them (see \cite{TSG} for examples of such rewirings).

\section{Topologically connected Fermi arcs via spectral flow: continuum models}\label{sec:top.analysis}
As in \S\ref{sec:tight.binding.Weyl}, consider a Weyl semimetal with a pair $\vect{k}^*, \vect{k}^\star$ of \emph{Weyl points}, which without loss, we assume to have chirality $+1$ and $-1$ respectively. Locally in $\vect{k}$-space, they are approximated by a right and left-handed Weyl Hamiltonian, respectively, up to a linear change of metric. However, it is very important to also have the \emph{global} information of how the Weyl points are connected in the Brillouin torus $\TT^3$, as emphasised theoretically in \cite{MT, MT1}. Experimentally, Fermi arcs have been observed to traverse Brillouin zone ``boundaries'' \cite{Souma,Morali}. At the level of \emph{topology}, the connectivity of Weyl points, as recordable by Dirac strings in the 3D Brillouin zone $\TT^3$, is projected onto the Fermi arc connectivity in $\TT^2$, as explained in \S\ref{sec:homology.Fermi.arc}.

In the effective two-band tight-binding Hamiltonian model $H_{\rm eff}$, the role of boundary conditions in determining the \emph{geometry} of Fermi arcs is less clear. This is because the boundary conditions are also enforced on the half-space operator $\wt{H}_{\rm eff}$ only as an effective perturbation. In order to understand phenomena like spin-momentum locking, we need a differential operator description, as we saw in the half-space Weyl Hamiltonian $\wt{H}^{\rm Weyl}(\gamma)$ example in \S\ref{sec:Weyl.Fermi}. We cannot, however, hope to deduce that a Fermi arc connects a pair of projected Weyl points, simply by using independent local Weyl Hamiltonian approximations valid only near each Weyl point. The local description needs to contain \emph{both} Weyl points.

\subsection{Generic differential operator model for a pair of connected Weyl points}\label{sec:generic.model}
We will study a fairly general class of 3D self-adjoint constant coefficient differential operators $H^{\rm Weyls}$ on $(L^2(\RR^3))^{\oplus 2}$ which are \emph{second-order} in the directions parallel to the boundary $z=0$, but first-order in the normal direction $z$. This improves some basic models studied in \cite{Witten, BLP}. Explicitly, the symbol (Fourier transform) of $H^{\rm Weyls}$ is taken to be of the form
\begin{equation}
H^{\rm Weyls}(\vect{p})=\begin{pmatrix} p_z &  \overline{g(p_x,p_y)} \\ g(p_x,p_y) & -p_z \end{pmatrix},\qquad \vect{p}=(p_x,p_y,p_z)\in\widehat{\RR}^3, \label{eqn:Weyls.symbol}
\end{equation}
where $g:\widehat{\RR}^2\rightarrow \CC$ is some degree-$2$ polynomial\footnote{Observe that $g={\rm id}$ gives the right-handed Weyl Hamiltonian, while $g(p_x,p_y)=p_x-ip_y$ gives some spin-rotated version of the left-handed Weyl Hamiltonian.}. Here we have switched the notation from $\vect{k}$ to $\vect{p}$ to emphasise that we are no longer taking the momentum modulo $2\pi$. 

The spectrum of $H^{\rm Weyls}(\vect{p})$ degenerates to zero exactly at the points $\vect{p}=(p_x,p_y,p_z=0)$ such that $g(p_x,p_y)=0$. While this restricts Weyl points to always have $p_z=0$, a suitable gauge transformation of the model Eq.\ \eqref{eqn:Weyls.symbol} can place the Weyl points at any general $p_z$, see \S\ref{sec:shifted.Weyl.points} for details.

As we will shortly introduce the boundary $z=0$, we only keep the Fourier transform $H^{\rm Weyls}$ in the $x,y$ directions. Thus Eq.\ \eqref{eqn:Weyls.symbol} is rewritten as a 2-parameter family
\begin{equation}
H^{\rm 1D}_{\rm Weyls}(p_x,p_y)=\begin{pmatrix} -i\frac{d}{dz} &  \overline{g(p_x,p_y)} \\ g(p_x,p_y) & i\frac{d}{dz} \end{pmatrix}\equiv H^{\rm 1D}(g(p_x,p_y)), \qquad (p_x,p_y)\in\widehat{\RR}^2.\label{eqn:Weyls.decomposition}
\end{equation}
of 1D Dirac Hamiltonians, which becomes massless precisely at the zeros of $g$.

It is convenient to regard $g:\widehat{\RR}^2\rightarrow\CC\cong\widehat{\RR}^2$ as a two-component vector field on the surface momentum space $\widehat{\RR}^2$. So each zero of $g$ has an associated local index, given by the winding number of $g$ taken along a small loop encircling the zero. Note that $\widehat{\RR}^2$ is not compact, so there is no global index cancellation unless some conditions on $g$ at infinity are given.

To produce a pair of (distinct) Weyl points $\vect{p}^\pm=(p^\pm_x,p^\pm_y,0)$ with respective chiralities $\pm 1$, we need the vector field $g$ to have zeros at $w^\pm=(p^\pm_x,p^\pm_y)$. Writing $w=(p_x,p_y)=p_x+ip_y$ as a complex variable, the polynomial
\begin{equation}
g(p_x,p_y)\equiv g(w)=(w-w^+)\overline{(w-w^-)}\label{eqn:model.vector.field}
\end{equation}
works, and has the added bonus that as $|w|^2=p_x^2+p_y^2\rightarrow\infty$, the vector field $g$ has an isotropic limiting direction (pointing to the right), see Fig.\ \ref{fig:Weyl.vector.field} for a sketch. This means that $g$ does not wind around the point at infinity. We remark that the simpler models in \cite{Witten, BLP} do not have this feature, and this has consequences for the global connectivity of Fermi arcs. 

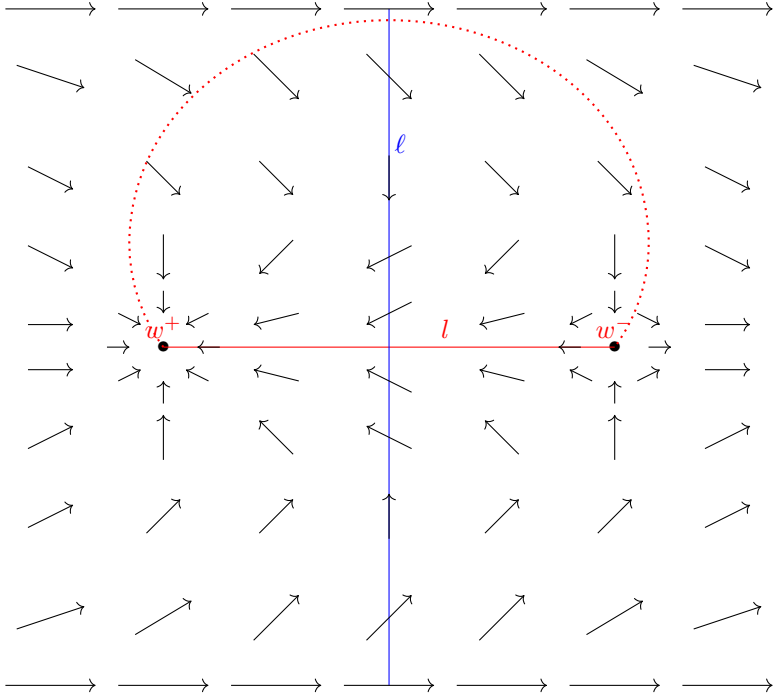
\begin{figure}[h!]
\begin{center}

\begin{tikzpicture}[scale=1.5]
\node at (-2,0) {$\bullet$};
\node at (2,0) {$\bullet$};
\node[red, above] at (-2,0) {$w^+$};
\node[red, above] at (2,0) {$w^-$};
\draw[red] (-2,0) -- (2,0);
\node[red, above] at (0.5,0) {$l$};
\draw[blue] (0,-3) -- (0,3);
\node[blue] at (0.1,1.8) {$\ell$}; 

\draw[red, thick, dotted] (-2,0) to [curve through = {(-2.3,1)(0,2.9) (2.3,1)}] (2,0);

\draw[->] (-3.4,3) -- (-2.6,3);
\draw[->] (-2.4,3) -- (-1.6,3);
\draw[->] (-1.4,3) -- (-0.6,3);
\draw[->] (-0.4,3) -- (0.4,3);
\draw[->] (0.6,3) -- (1.4,3);
\draw[->] (1.6,3) -- (2.4,3);
\draw[->] (2.6,3) -- (3.4,3);

\draw[->] (-3.4,-3) -- (-2.6,-3);
\draw[->] (-2.4,-3) -- (-1.6,-3);
\draw[->] (-1.4,-3) -- (-0.6,-3);
\draw[->] (-0.4,-3) -- (0.4,-3);
\draw[->] (0.6,-3) -- (1.4,-3);
\draw[->] (1.6,-3) -- (2.4,-3);
\draw[->] (2.6,-3) -- (3.4,-3);

\draw[->] (-3.3,2.5) -- (-2.7,2.3);
\draw[->] (-2.25,2.55) -- (-1.75,2.25);
\draw[->] (-1.2,2.6) -- (-0.8,2.2);
\draw[->] (-0.2,2.6) -- (0.2,2.2);
\draw[->] (0.8,2.6) -- (1.2,2.2);
\draw[->] (1.75,2.55) -- (2.25,2.25);
\draw[->] (2.7,2.5) -- (3.3,2.3);

\draw[->] (-3.3,-2.5) -- (-2.7,-2.3);
\draw[->] (-2.25,-2.55) -- (-1.75,-2.25);
\draw[->] (-1.2,-2.6) -- (-0.8,-2.2);
\draw[->] (-0.2,-2.6) -- (0.2,-2.2);
\draw[->] (0.8,-2.6) -- (1.2,-2.2);
\draw[->] (1.75,-2.55) -- (2.25,-2.25);
\draw[->] (2.7,-2.5) -- (3.3,-2.3);

\draw[->] (-3.2,1.6) -- (-2.8,1.4);
\draw[->] (-2.15,1.65) -- (-1.85,1.35);
\draw[->] (-1.15,1.65) -- (-0.85,1.35);
\draw[->] (0,1.7) -- (0,1.3);
\draw[->] (0.85,1.65) -- (1.15,1.35);
\draw[->] (1.85,1.65) -- (2.15,1.35);
\draw[->] (2.8,1.6) -- (3.2,1.4);

\draw[->] (-3.2,-1.6) -- (-2.8,-1.4);
\draw[->] (-2.15,-1.65) -- (-1.85,-1.35);
\draw[->] (-1.15,-1.65) -- (-0.85,-1.35);
\draw[->] (0,-1.7) -- (0,-1.3);
\draw[->] (0.85,-1.65) -- (1.15,-1.35);
\draw[->] (1.85,-1.65) -- (2.15,-1.35);
\draw[->] (2.8,-1.6) -- (3.2,-1.4);

\draw[->] (-3.2,0.9) -- (-2.8,0.7);
\draw[->] (-2,1) -- (-2,0.6);
\draw[->] (-0.85,0.95) -- (-1.15,0.65);
\draw[->] (0.2,0.9) -- (-0.2,0.7);
\draw[->] (1.15,0.95) -- (0.85,0.65);
\draw[->] (2,1) -- (2,0.6);
\draw[->] (2.8,0.9) -- (3.2,0.7);

\draw[->] (-3.2,-0.9) -- (-2.8,-0.7);
\draw[->] (-2,-1) -- (-2,-0.6);
\draw[->] (-0.85,-0.95) -- (-1.15,-0.65);
\draw[->] (0.2,-0.9) -- (-0.2,-0.7);
\draw[->] (1.15,-0.95) -- (0.85,-0.65);
\draw[->] (2,-1) -- (2,-0.6);
\draw[->] (2.8,-0.9) -- (3.2,-0.7);

\draw[->] (-2,0.5) -- (-2,0.3);
\draw[->] (-2.4,0.3) -- (-2.2,0.2);
\draw[->] (-2.5,0) -- (-2.3,0);
\draw[->] (-2.4,-0.3) -- (-2.2,-0.2);
\draw[->] (-2,-0.5) -- (-2,-0.3);
\draw[->] (-1.6,-0.3) -- (-1.8,-0.2);
\draw[->] (-1.5,0) -- (-1.7,0);
\draw[->] (-1.6,0.3) -- (-1.8,0.2);

\draw[->] (2,0.5) -- (2,0.3);
\draw[->] (2.2,0.3) -- (2.4,0.2);
\draw[->] (2.3,0) -- (2.5,0);
\draw[->] (2.2,-0.3) -- (2.4,-0.2);
\draw[->] (2,-0.5) -- (2,-0.3);
\draw[->] (1.8,-0.3) -- (1.6,-0.2);
\draw[->] (1.7,0) -- (1.5,0);
\draw[->] (1.8,0.3) -- (1.6,0.2);

\draw[->] (-3.2,0.2) -- (-2.8,0.2);
\draw[->] (-3.2,-0.2) -- (-2.8,-0.2);
\draw[->] (2.8,0.2) -- (3.2,0.2);
\draw[->] (2.8,-0.2) -- (3.2,-0.2);

\draw[->] (-0.8,0.3) -- (-1.2,0.2);
\draw[->] (-0.8,-0.3) -- (-1.2,-0.2);
\draw[->] (1.2,0.3) -- (0.8,0.2);
\draw[->] (1.2,-0.3) -- (0.8,-0.2);
\draw[->] (0.2,0.4) -- (-0.2,0.2);
\draw[->] (0.2,-0.4) -- (-0.2,-0.2);

\end{tikzpicture}
\caption{An example of a vector field $g:\widehat{\RR}^2\rightarrow \widehat{\RR}^2\subset\mathcal{M}$, with zeros of opposite indices at distinct points $w^+, w^-$. This gives a field of half-line Dirac Hamiltonians $\{H^{\rm 1D}_{\rm Weyls}(p_x,p_y)\}_{(p_x,p_y)\in\widehat{\RR}^2}$ pulled back from $\mathcal{M}$, which assemble into the 3D half-space differential operator Eq.\ \eqref{eqn:Weyls.symbol} modelling a pair of connected Weyl points. There is spectral flow along loops/paths for which $g$ rotates a non-zero number of times, for example, the small loops encircling $w^\pm$, or the blue line $\ell$. These windings, thus spectral flows, determine the topology of the Fermi arc (represented by the solid red line), but depending on the boundary conditions, the actual Fermi arc will be some deformed version (such as the dotted red line).
}\label{fig:Weyl.vector.field}
\end{center}
\end{figure}

\subsubsection{Spectral flow structure for $\wt{H}^{\rm Weyls}$}
Analogously to Eq.\ \eqref{eqn:Weyls.decomposition}, for the half-space version of $H^{\rm Weyls}$, we get a decomposition into a family of \emph{half-line} Dirac Hamiltonians,
\begin{align*}
 \wt{H}^{\rm Weyls}(\gamma)&\rightsquigarrow \wt{H}^{\rm 1D}_{\rm Weyls}(p_x,p_y;\gamma(p_x,p_y))\\
 &=\wt{H}^{\rm 1D}\left(g(p_x,p_y);\gamma(p_x,p_y)\right)\in\mathcal{M},\qquad(p_x,p_y)\in\widehat{\RR}^2,
\end{align*}
according to some boundary condition function $\gamma=\gamma(p_x,p_y)$. Thus $\wt{H}^{\rm Weyls}(\gamma)$ is really given by a ``classifying map'' 
\begin{equation*}
\wt{H}^{\rm Weyls}(\gamma)\;\;\longleftrightarrow\;\; \wt{h}^{\rm Weyls;\gamma}:\widehat{\RR}^2\rightarrow \mathcal{M},
\end{equation*}
whose restriction to $\widehat{\RR}^2\setminus\{w^+,w^-\}$ lands in $\mathcal{M}^\prime$. Since $\gamma$ is defined over the entire $\widehat{\RR}^2$, for the purposes of computing spectral flow of this family (of massive half-line Dirac Hamiltonians), we can assume by Cor.\ \ref{cor:general.Dirac.loop.sf} that $\gamma$ is just some constant (cf.\ Prop.\ \ref{prop:boundary.independence}). Then $\wt{h}^{\rm Weyls;\gamma}$ factorises as
\begin{equation}
\wt{h}^{\rm Weyls;\gamma}=(\wt{h}^{\rm Weyl;\gamma}\circ g)\;:\;\widehat{\RR}^2\rightarrow \mathcal{M},\label{eqn:factorisation}
\end{equation}
where $\wt{h}^{\rm Weyl;\gamma}:\widehat{\RR}^2\rightarrow\mathcal{M}$ is the canonical classifying map (see Eq.\ \eqref{eqn:Weyl.function}) for the standard half-space Weyl Hamiltonian $\wt{H}^{\rm Weyl}(\gamma)$. In particular, $\wt{h}^{\rm Weyls;\gamma}$ is gap-continuous.

Pick any continuous momentum space loop $\ell:S^1\rightarrow \wh{\RR}^2\setminus\{w^+,w^-\}$ that avoids the projected Weyl points --- this determines a gap-continuous operator loop 
\begin{equation*}
\ell^{\rm op}=\wt{h}^{\rm Weyls;\gamma}\circ\ell=\wt{h}^{\rm Weyl;\gamma}\circ(g\circ\ell)\;:\;S^1\rightarrow\mathcal{M}^\prime,
\end{equation*}
where we have used the factorisation Eq.\ \eqref{eqn:factorisation}. Its spectral flow is, by Prop.\ \ref{prop:boundary.independence}, just the winding number of $g\circ\ell$ around the origin. For \emph{any} loop $\ell_{w^+}$ encircling only $w^+$ (but not $w^-$) once in the anticlockwise sense, the corresponding Fredholm loop $\ell_{w^+}^{\rm op}$ will have the same spectral flow, by homotopy invariance. So we may replace $\ell_{w^+}$ by a loop $\ell_{w^+,\epsilon}$ of radius $\epsilon$ centred at $w^+$. Then we compute
\begin{equation*}
{\rm sf}(\ell^{\rm op}_{w^+})=-{\rm Wind}(g\circ\ell_{w^+;\epsilon};0)\equiv -{\rm Index}_{w^+}(g) =-1.
\end{equation*}
Similarly, ${\rm sf}(\ell^{\rm op}_{w^-})=-{\rm Index}_{w^-}(g) =+1$ for \emph{any} operator loop going anticlockwise around $w^-$ (but not $w^+$) once.

In conclusion, we see that spectral flow structure of $\wt{H}^{\rm Weyls}(\gamma)$ is simply pulled back from that of Weyl Hamiltonian via the classifying vector field $g$, and we can determine the Fermi arc topology by inspecting the windings of $g$.

\subsubsection{Connectedness of Fermi arcs}
Strictly speaking, we have only shown that a Fermi arc emanates out of $w^+$ and also $w^-$, but not that they must join up and form a continuous arc from $w^+$ to $w^-$. They could actually separately branch off to infinity (the simplified alternative model of \cite{BGLM} has this feature). 
To exclude this possibility, let us study the winding of $g$ along the infinite line $\ell$ which orthogonally bisects the line segment $l$ joining $w^+$ to $w^-$. This winding makes sense as $g$ points to the right at both ends of this infinite line. Now, an elementary calculation shows that $g(\frac{w^++w^-}{2})$ points to the \emph{left} at the point of intersection $\frac{w^++w^-}{2}$ between $\ell$ and $l$. Furthermore, the vertical component of $g$ at a point on $\ell$ lying on one side of $l$, is opposite to that of the corresponding point on the other side of $l$. This means that the direction of $g$ winds around once fully as $\ell$ is traversed, see Fig.\ \ref{fig:Weyl.vector.field}. Therefore there is spectral flow along $\ell$ (of $+1$ or $-1$ depending on orientation convention), so the Fermi arc must intersect $\ell$ once (up to signs), at some point depending on the boundary condition $\gamma$. We can translate $\ell$ in a manner parallel to $l$, and the same winding calculation holds until $\ell$ hits either $w^+$ or $w^-$. Therefore the overall Fermi arc joins up $w^+$ and $w^-$.

\subsection{Homology of Fermi arcs in continuum models via vector fields}\label{sec:homology.perspective}
{\bf Creation/annihilation of Weyl points.} We can easily construct a homotopy $g_t, t\in[0,1]$ of the vector field such that initially $g_0=g$, the projected Weyl points $w^\pm$ move towards each other as $t$ is increased, and finally collide at their midpoint $w_0=\frac{w^+ + w^-}{2}$ when $t=1$. The zero $w_0$ of the final vector field $g_1$ has zero local index, and can be removed by a further homotopy \cite{Milnor}. The resultant non-singular vector field yields, after composition with $\wt{h}^{\rm Weyl;\gamma}$, the nowhere-vanishing symbol of some differential operator on the half-plane. Then the essential spectrum is gapped everywhere in the contractible space $\widehat{\RR}^2$, and there is no spectral flow at all. This is what is meant by a connected pair of Weyl points being annihilated (or created, by running the homotopy in reverse).

In the language of Euler structures and Euler chains, utilised in the Weyl semimetal context in \cite{MT}, the deformation classes of the singular vector field $g$ (regarded as being defined over compactified momentum space $\widehat{\RR}^2\cup\{\infty\}\simeq S^2$) which preserve singularities $w^+, w^-$ (local creation and annihilation of pairs of singularities are allowed) is called an \emph{Euler structure} \cite{Turaev} relative to $\{w^+,w^-\}$. Obstruction theory says that these Euler structures are classified by \mbox{$H^1(S^2\setminus\{w^+,w^-\})\cong\ZZ$}; Poincar\'{e} dually, the \emph{Euler chains} connecting $w^+$ to $w^-$ are classfied by $H_1(S^2,\{w^+,w^-\})$. The Fermi arc is an Euler chain with end points $w^\pm$, and Poincar\'{e} duality says that its class in $H_1(S^2,\{w^+,w^-\})$ is already specified by counting its intersection numbers with various homology 1-cycles (i.e.\ non-contractible loops) in $S^2\setminus\{w^+,w^-\}$.

\subsubsection{Bulk Dirac string projects homologically to Fermi arc} 
The bulk operator $H^{\rm Weyls}$ is itself given by a 3-component vector field 
\begin{equation*}
\breve{g}:\widehat{\RR}^3\ni \vect{p}\mapsto (g(p_x,p_y),p_z)\in \widehat{\RR}^3,
\end{equation*}
according to its decomposition into a $2\times 2$ matrix family in Eq. \eqref{eqn:Weyls.symbol}. This $\breve{g}$ has zeros precisely at the Weyl points $\vect{p}^\pm=(p_x^\pm,p_y^\pm,p_z^\pm)=(w^\pm,0)$. At each fixed $p_z$, this $\breve{g}(\cdot,\cdot,p_z)=g$ may be regarded as a vector field on $S^2$ (compactification of $\widehat{\RR}^2$) as above, so that $\breve{g}$ gives a vector field on $S^2\times\widehat{\RR}$. Furthermore, on the $p_z\rightarrow\infty$ limit sphere, $\breve{g}$ points in the same direction, and similarly on the $p_z\rightarrow -\infty$ limit sphere. So overall, $\breve{g}$ is a vector field on the \emph{suspension} ${\mathtt S}S^2\cong S^3$, with $\breve{g}$ being non-singular except at the two points $\vect{p}^\pm$. Except for $\vect{p}^\pm$, the $2\times 2$ matrix $H^{\rm Weyls}(\vect{p})$ has a gap around 0, and thus a one-dimensional negative eigenspace, corresponding to the direction $\breve{g}(\vect{p})/|\breve{g}(\vect{p})|\in S^2\cong\CC\PP^1$ (Bloch sphere identification). 

Globally, $H^{\rm Weyls}$ has a negative energy eigenbundle $\mathcal{E}_-$ well-defined over \mbox{$S^3\setminus \{\vect{p}^+,\vect{p}^-\}$}, with first Chern class in $H^2(S^3\setminus \{\vect{p}^+,\vect{p}^-\})$. Restricted to small two-spheres centered at $\vect{p}^\pm$, $\mathcal{E}_-$ has local Chern class given by the $\pm$ local index of $\breve{g}$ at $\vect{p}^\pm$ --- this is the local ``Dirac monopole picture'' of the Weyl points that we saw in Remark \ref{rem:Dirac.string}. 
Poincar\'{e} dually, the Chern class of $\mathcal{E}_-$ becomes a homology class in 
\begin{equation*}
H_1(S^3,\{\vect{p}^+,\vect{p}^-\})\cong\ZZ,
\end{equation*}
represented by a \emph{Dirac string} connecting $\vect{p}^+$ to $\vect{p}^-$. Physically, if one removes a neighbourhood of the Dirac string, then $\mathcal{E}_-$ becomes trivialisable, and can be given a global Berry connection 1-form.

The map $\pi$ projecting out the suspension variable takes $\vect{p}^\pm$ to $w^\pm$. Accordingly, there is a homology projection $\pi_*:H_1(S^3,\{\vect{p}^+,\vect{p}^-\})\rightarrow H_1(S^2,\{w^+,w^-\})$, taking the bulk Dirac string to the Fermi arc --- we saw the tight-binding model version of this observation in \S\ref{sec:homology.Fermi.arc}. In more detail, the suspension quotient $S^2\times [0,1]\rightarrow {\mathtt S}S^2\cong S^3$ induces, by Poincar\'{e} duality, the map $H_1(S^3,\{\vect{p}^+,\vect{p}^-\})\rightarrow H_1(S^2\times [0,1],\{\vect{p}^+,\vect{p}^-\})$, which composes with the natural map $H_1(S^2\times [0,1],\{\vect{p}^+,\vect{p}^-\})\rightarrow H_1(S^2,\{w^+,w^-\})$ to give $\pi_*$ above.

As mentioned in \cite{MT}, the actual geometric Fermi arc in the homology class is as yet undetermined. The missing data is that of the boundary conditions $\gamma$, and possibly a perturbation potential $V$. We showed in Prop. \ref{prop:boundary.independence} and Theorem \ref{thm:perturbation.independence} that these extra data do not change the topology (i.e.\ homology) of the Fermi arc.

\subsection{Producing Weyl points at arbitrary $p_z$}\label{sec:shifted.Weyl.points}
The model Eq.\ \eqref{eqn:Weyls.symbol} with $g$ given by Eq.\ \eqref{eqn:model.vector.field}, has Weyl points $\vect{p}^\pm=(p_x^\pm,p_y^\pm,0)$ projecting onto $w^\pm=(p^\pm_x,p^\pm_y)$ where $w^\pm$ are the zeros of $g$. We wish to make a modification such that $\vect{p}^\pm$ have general non-zero $p_z$-components $p_z^\pm$.

Suppose, without loss, that $p^+_x\neq p^-_x$, and let $a, b$ be the appropriate constants such that $p^\pm_z=ap^\pm_x+b$. This gives a twisting function $A:(p_x,p_y)\mapsto ap_x+b\in\RR$. Take the $A$-twisted symbol
\begin{equation*}
H^{\rm Weyls}_A(\vect{p})=\begin{pmatrix} p_z-(ap_x +b) &  \overline{g(p_x,p_y)} \\ g(p_x,p_y) & -(p_z-(ap_x+b)) \end{pmatrix}.
\end{equation*}
By construction, it vanishes exactly when $g=0$ and $p_z=ap_x+b$, i.e.\ when $\vect{p}=(p_x^\pm,p_y^\pm,p_z^\pm)\equiv\vect{p}^\pm$ as desired.

Fourier transforming $H^{\rm Weyls}_A$ only along $x,y$, produces a family of \emph{twisted} 1D Dirac Hamiltonians,
\begin{equation*}
H^{\rm 1D}_{{\rm Weyls},A}(p_x,p_y)=\begin{pmatrix}-i\frac{d}{dz}-A(p_x,p_y) & \overline{g(p_x,p_y)} \\ g(p_x,p_y) & i\frac{d}{dz}+A(p_x,p_y) \end{pmatrix},\qquad (p_x,p_y)\in\widehat{\RR}^2,
\end{equation*}
where $A\equiv A(p_x,p_y)=ap_x+b$. Similarly, the half-space $\wt{H}^{\rm Weyls}_A(\gamma)$ with boundary conditions $\gamma$ decomposes into twisted half-line Dirac Hamiltonians,
\begin{equation*}
\wt{H}^{\rm Weyls}_A(\gamma)\rightsquigarrow \wt{H}^{\rm 1D}_{A(p_x,p_y)}\left(g(p_x,p_y);\gamma(p_x,p_y)\right),\qquad (p_x,p_y)\in\widehat{\RR}^2.
\end{equation*}
Because $A$ depends linearly on $p_x, p_y$, we can safely gauge transform it away while still preserving the spectral flow structure of the half-line operator family $\{\wt{H}^{\rm 1D}_{{\rm Weyls},A}(p_x,p_y;\gamma)\}_{(p_x,p_y)\in\widehat{\RR}^2}$, as argued in \S\ref{sec:gauge.invariance}, Prop.\ \ref{prop:Wahl.twisted.sf}. This means that only the windings of $g$ around the projected Weyl points $w^\pm$ matter for the Fermi arc, and the analysis of \S\ref{sec:generic.model}-\S\ref{sec:homology.perspective} holds without the requirement that $p_z^\pm=0$.

\section*{Acknowledgements}
The author thanks G.\ De Nittis and K.\ Yamamoto for stimulating discussions, and A.\ Carey for clarifying some technicalities of spectral flow. He also acknowledges support from the Australian Research Council, via research grants DE170900149 and DP200100729.

\appendix
\section{Spectrum of half-line Dirac Hamiltonians}\label{sec:half.line.spectrum}
\subsection{Deficiency indices of half-line Dirac Hamiltonians}\label{sec:deficiency.indices}
It is well-known that (minus) the momentum operator $i\frac{d}{dz}$ on the half-line cannot be made self-adjoint. Physically, the dynamics generated by $i\frac{d}{dz}$ on $L^2(\RR)$ is that of left translation with unit speed, and this is unitary (conserves probability) on the full line Hilbert space $L^2(\RR)$. But once a boundary at $z=0$ is introduced, there are no right-movers to reflect into, and the left-movers keep getting absorbed by the boundary.

To compensate for this problem, one adds as a direct sum, the operator $-i\frac{d}{dz}$, so that the left/right movers exactly compensate. However, another issue arises in the \emph{uniqueness} of this procedure. Namely, a left-moving wave $e^{ip(z+t)}$ could be superposed with a right-moving $e^{ip(z-t)}$ (having the same energy $p$) with some ${\rm U}(1)$ phase shift constituting a choice of boundary condition --- this is another way of understanding the spin polarisation condition in Eq. \eqref{eqn:half.line.Dirac.domain}. 

These issues can be systematically addressed in von Neumann's theory of deficiency indices \cite{RS2}. Roughly speaking, these indices count the number of $\pm i$ eigenvalues that need to be eliminated when constructing self-adjoint extensions that generate genuine unitary dynamics. For example, $z\mapsto e^{-z}$ is a normalisable eigenfunction for $\pm i\frac{d}{dz}$ with imaginary eigenvalue $\mp i$.

We start with the massless case, $\wt{H}^{\rm 1D}(0,0)=-i\frac{d}{dz}\oplus i\frac{d}{dz}$, which is first defined as a symmetric operator on $C_0^\infty(\RR_+)^{\oplus 2}$, the smooth functions with compact support away from the boundary. This is easily seen to have deficiency indices $(1,1)$. Accordingly, there is a ${\rm U}(1)$-family of possible self-adjoint extensions, and they are explicitly given by $\wt{H}^{\rm 1D}(0,0;\gamma)$.

For the massive half-line Dirac Hamiltonians, $\wt{H}^{\rm 1D}(m,\theta)$, they also have deficiency indices $(1,1)$, due to the stability of the latter against the bounded perturbation (the mass term) \cite{Kato}. In fact, deficiency indices are stable against relatively bounded perturbations with bound smaller than $1$, and even more generally, see \cite{BF}. Thus the $\wt{H}^{\rm 1D}(m,\theta;\gamma)$ of Eq.\ \eqref{eqn:half.line.Dirac.domain} give the most general self-adjoint half-line Dirac Hamiltonians. 

\subsection{Essential spectrum of half-line Dirac Hamiltonians}\label{sec:half.line.spectrum2}
Due to the finite deficiency indices, for any boundary condition $\gamma$, and $\lambda\in\CC\setminus\RR$, the resolvents $(\wt{H}^{\rm 1D}(m,\theta;\gamma)-\lambda)^{-1}$ and $(\wt{H}^{\rm 1D}(m,\theta;0)-\lambda)^{-1}$ only differ by a finite-rank, thus compact operator (\cite{RS4} XIII.4, Example 5). Consequently, Weyl's essential spectrum theorem (\cite{RS4} Theorem XIII.14) says that
\begin{equation*}
\sigma_{\rm ess}\left(\wt{H}^{\rm 1D}(m,\theta;\gamma)\right)=\sigma_{\rm ess}\left(\wt{H}^{\rm 1D}(m,\theta;0)\right),\qquad \forall\; e^{i\gamma}\in{\rm U}(1).
\end{equation*}
Together with the unitary equivalence $\wt{H}^{\rm 1D}(m,\theta;\gamma)\cong \wt{H}^{\rm 1D}(m,\theta-\gamma;0)$, it therefore suffices to consider $\gamma=0=\theta$ as far as $\sigma_{\rm ess}$ is concerned. 

It is convenient to carry out a further unitary spin rotation $\frac{1}{\sqrt{2}}\begin{pmatrix} 1 & 1 \\ i & -i\end{pmatrix}$ in the $\CC^2$ degrees of freedom, which effects
\begin{equation*}
\wt{H}^{\rm 1D}(m,0;0)= \begin{pmatrix} -i\frac{d}{dz} & m \\ m & i\frac{d}{dz} \end{pmatrix} \overset{\cong}{\longrightarrow} \begin{pmatrix} m & -\frac{d}{dz} \\ \frac{d}{dz} & -m \end{pmatrix},
\end{equation*}
and transforms the $\gamma=0$ boundary condition to read $\psi(0)\propto \begin{pmatrix} 1 \\ 0\end{pmatrix}$. Notice that the latter entails a Dirichlet condition on the second component of the spinor, and no condition on the first component. Explicitly, the domain of self-adjointness is $\mathring{H}^1(\RR_+)\oplus H^1(\RR_+)$, where $\mathring{H}^1(\RR_+)$ denotes $H^1(\RR_+)$ subject to the vanishing condition at $z=0$. For the first component, we can still use the odd (sine) Fourier transform, so the spectral problem for $\wt{H}^{\rm 1D}(m,0;0)$ may be analysed in momentum space. Similarly to the boundaryless case, we find that the full spectrum is 
\begin{equation*}
\sigma(\wt{H}^{\rm 1D}(m,0;0))=(-\infty,-m]\cup[m,\infty),
\end{equation*}
which therefore completely comprises essential spectrum. Then Eq. \eqref{eqn:Dirac.essential.spectrum} follows.

\subsection{Relatively compact perturbations}\label{sec:perturbations}
Let us abbreviate $\wt{H}^{\rm 1D}(m,0;0)$ to $H_0$. The resolvent $(H_0-\lambda)^{-1}$ at $\lambda\in\CC\setminus\sigma(H_0)$ can be computed explicitly to have the integral kernel (see pp. 4 of \cite{Enblom})
\begin{align*}
K(z,z^\prime;\lambda)&=\frac{1}{2}\begin{pmatrix}-\frac{\lambda+m}{i\mu} & {\rm sgn}(z-z^\prime) \\ -{\rm sgn}(z-z^\prime) & -\frac{\lambda-m}{i\mu}\end{pmatrix}e^{i\mu |z-z^\prime|}\\ &\qquad -\frac{1}{2}\begin{pmatrix}\frac{\lambda+m}{i\mu} & 1\\ 1 & \frac{\lambda-m}{i\mu}\end{pmatrix}e^{i\mu (z-z^\prime)},\qquad z,z^\prime\in\RR_+,
\end{align*}
where $\mu=(\lambda^2-m^2)^{1/2}$ has ${\rm Im}\,\mu >0$. This kernel is square-integrable on bounded subsets, so if $V=V(z)$ is a $2\times 2$ Hermitian matrix-valued continuous potential on $\RR_+$ that vanishes at infinity, we can, by truncating kernels, approximate $V(H_0-\lambda)^{-1}$ in norm using Hilbert--Schmidt operators; thus $V(H_0-\lambda)^{-1}$ is compact. This argument to establish the \emph{$H_0$-relatively compactness} of $V$ is of a fairly standard form, and works for potentials which are merely $L^2$ up to a $\epsilon$-essentially bounded piece, see e.g.\ \cite{RS4} XIII.4 Example 6.

It follows that the resolvent difference
\begin{align*}
(H_0+V-\lambda)^{-1}-(H_0-\lambda)^{-1}&=(H_0+V-\lambda)^{-1}(1-(H_0-\lambda+V)(H_0-\lambda)^{-1})\\
&=-(H_0+V-\lambda)^{-1}V(H_0-\lambda)^{-1}
\end{align*}
is compact. Then Weyl's essential spectrum theorem applies, saying that $H_0$ and $H_0+V$ have the same essential spectrum, i.e.,
\begin{equation*}
\sigma_{\rm ess}(\wt{H}^{\rm 1D}(m,0;0)+V)=\sigma_{\rm ess}(\wt{H}^{\rm 1D}(m,0;0)).
\end{equation*}

Finally, since $\wt{H}^{\rm 1D}(m,\theta;\gamma)$ and $\wt{H}^{\rm 1D}(m,0;\gamma+\theta)$ are unitarily equivalent, and the resolvents of $\wt{H}^{\rm 1D}(m,0;\gamma+\theta)$ and $\wt{H}^{\rm 1D}(m,0;0)$ only differ by a compact operator (\S\ref{sec:half.line.spectrum2}), we also see that $V$ is also compact relative to any $\wt{H}^{\rm 1D}(m,\theta;\gamma)$. Thus for the perturbed half-line Dirac Hamiltonians $\wt{H}^{\rm 1D}_V(m,\theta;\gamma)\equiv \wt{H}^{\rm 1D}(m,\theta;\gamma)$,
\begin{equation*}
\sigma_{\rm ess}(\wt{H}^{\rm 1D}_V(m,\theta;\gamma))=\sigma_{\rm ess}(\wt{H}^{\rm 1D}(m,\theta;\gamma)),\qquad \forall\; (m,\theta,\gamma)\in\widehat{\RR}^2\times{\rm U}(1)\cong\mathcal{M}.
\end{equation*}



\begin{thebibliography}{9}
\bibliographystyle{plain}
\bibitem{AMV} Armitage, N.P., Mele, E.J., Vishwanath, A.: Weyl and Dirac semimetals in three-dimensional solids. Rev. Mod Phys. {\bf 90}(1) 015001 (2018)
\bibitem{AS}  Atiyah, M.F., Singer, I.M.: Index theory for skew-adjoint Fredholm operators. Publ. Math. IHES {\bf 37} 305--326 (1969)
\bibitem{APS} Atiyah, M.F. Patodi, V.K., Singer, I.M.: Spectral asymmetry and Riemannian geometry III. Math. Proc. Camb. Philos. Soc. {\bf 79} 71--99 (1976)
\bibitem{BF} Behncke, H., Focke, H.: Stability of deficiency indices. Proc. Royal Soc. Edinburgh {\bf 78A} 119--127 (1977)
\bibitem{BGLM} Burello, M., Guadagnini, E., Lepori, L., Mintchev, M.: Field theory approach to the quantum transport in Weyl semimetals. Phys. Rev. B {\bf 100} 155131 (2019)
\bibitem{BLP} Booss-Bavnbek, B., Lesch, M., Phillips, J.: Unbounded Fredholm Operators and Spectral Flow. Canad. J. Math. {\bf 57}(2) 225--250 (2005)
\bibitem{Braverman} Braverman, M.: Spectral flows of Toeplitz operators and bulk-edge correspondence. Lett. Math. Phys. {\bf 109} 2271--2289 (2019)
\bibitem{CT} Carey, A., Thiang, G.C.: The Fermi gerbe of Weyl semimetals. arXiv:2009.02064
\bibitem{Enblom} Enblom, A.: Resolvent estimates and bounds on eigenvalues for Dirac operators on the half-line. J. Phys A: Math. Theor. {\bf 51} 165203 (2018)
\bibitem{GJT} Graf, G.M., Jud, H., Tauber, C.: Topology in shallow-water waves: a violation of bulk-edge correspondence. arXiv:2001.00439.
\bibitem{GL} Gruber, M.J., Leitner, M.: Spontaneous Edge Currents for the Dirac Equation in Two Space Dimensions. Lett. Math. Phys. {\bf 75} 25--37 (2006)
\bibitem{HKW} Hashimoto, K., Kimura, T., Wu, X.: Boundary conditions of Weyl semimetals. Prog. Theor. Exp. Phys. 053I01 (2017)
\bibitem{Joachim} Joachim, M.: Unbounded Fredholm operators and $K$-theory. In: Farrell, F.T., L\"{u}ck, W. (eds.) High-dimensional manifold topology, pp.\ 177--199. World Sci. Publishing, Singapore (2003)
\bibitem{Kato} Kato, T.: Perturbation theory for linear operators. (Berlin: Springer, 1966)
\bibitem{LT} Ludewig, M., Thiang, G.C.: Cobordism invariance of topological edge-following states. arXiv:2001.08339
\bibitem{Lv} Lv, B.-Q. et al.: Experimental discovery of Weyl semimetal TaAs. Phys. Rev. X {\bf 5} 031013 (2015)
\bibitem{MT1} Mathai, V., Thiang, G.C.: Global topology of Weyl semimetals and Fermi arcs. J. Phys. A: Math. Theor. {\bf 50} 11LT01 (2017) 
\bibitem{MT} Mathai, V., Thiang, G.C.: Differential topology of semimetals. Commun. Math. Phys. {\bf 355} 561--602 (2017)
\bibitem{Milnor} Milnor, J.: Topology from the differentiable viewpoint, Princeton Univ. Press (1997)
\bibitem{Morali} Morali, N. et al.: Fermi-arc diversity on surface terminations of the magnetic Weyl semimetal Co$_{3}$Sn${}_2$S${}_2$. Science {\bf 365} (2019)
\bibitem{Phillips} Phillips, J.: Self-adjoint Fredholm operators and spectral flow. Canad. Math. Bull. {\bf 39}(4) 460--467 (1996)
\bibitem{PSB} Prodan, E., Schulz-Baldes, H.: Bulk and boundary invariants for complex topological insulators. Math. Phys. Studies. Springer, Berlin (2016)
\bibitem{RS1} Reed, M., Simon, B.: Methods of Modern Mathematical Physics, vol I, Acad. Press (1980).
\bibitem{RS2} Reed, M., Simon, B.: Methods of Modern Mathematical Physics, vol II, Acad. Press (1975).
\bibitem{RS4} Reed, M., Simon, B.: Methods of Modern Mathematical Physics, vol IV, Acad. Press (1978).
\bibitem{Souma} Souma, S. et al.: Direct observation of nonequivalent Fermi-arc states of opposite surfaces in the noncentrosymmetric Weyl semimetal NbP, Phys. Rev. B {\bf 93} 161112(R) (2016)
\bibitem{Thiang} Thiang, G.C.: Edge-following topological states. J. Geom. Phys. {\bf 156} 103796 (2020)
\bibitem{TSG} Thiang, G.C., Sato, K., Gomi, K.: Fu--Kane--Mele monopoles in semimetals. Nucl. Phys. B {\bf 923C} 107--125 (2017)
\bibitem{Turaev} Turaev, V.G.: Euler structures, nonsingular vector fields, and torsions of Reidemeister type. Mathematics of the USSR-Izvestiya {\bf 34}(3) 627 (1990)
\bibitem{Wahl} Wahl. C.: A New Topology on the Space of Unbounded Selfadjoint Operators, $K$-theory and Spectral Flow. In: $C^*$-algebras and Elliptic Theory II, Trends in Mathematics, 297--309 (2008)
\bibitem{Witten} Witten, E.: Three lectures on topological phases of matter. Nuovo Cimento {\bf 39} 313--370 (2016)
\bibitem{Xu} Xu, S.-Y. et al.: Observation of Fermi Arc Surface States in a Topological Metal: A New Type of 2D Electron Gas. Science {\bf 347} 294--298 (2015)
\bibitem{Xu2} Xu, S.-Y. et al.: Discovery of a Weyl fermion semimetal and topological Fermi arcs. Science {\bf 349} 613--617 (2015)
\end{thebibliography}
\end{document}